\newtheorem{theorem}{Theorem}[section]
\newtheorem{proposition}{Proposition}[section]
\newtheorem{lemma}[theorem]{Lemma}
\newtheorem{definition}[theorem]{Definition}
\begin{document}
\title{
Hybrid Koopman $C^*$--formalism and the hybrid quantum-classical master equation
}

\author{C. Bouthelier-Madre $^{1,2,3}$, J. Clemente-Gallardo $^{1,2,3}$, \\ L. González-Bravo $^4$, and D. Martínez-Crespo $^{1,3}$}
\date{$^1$ Departamento de F\'{\i}sica Te\'orica, Universidad de Zaragoza,  Campus San Francisco, 50009 Zaragoza (Spain) \\
$^2$ Instituto de Biocomputaci{\'{o}}n y F{\'{\i}}sica de Sistemas Complejos (BIFI), Universidad de Zaragoza,  Edificio I+D, Mariano Esquillor s/n, 50018 Zaragoza (Spain) \\
$^3$ Centro de Astropart{\'{\i}}culas y F{\'{\i}}sica de Altas Energías (CAPA), Departamento de F\'{\i}sica Te\'orica, Universidad de Zaragoza,  Campus San Francisco, 50009 Zaragoza (Spain)\\
$^4$ Departamento de Matemáticas, Universidad Carlos III de Madrid,   Leganés (Spain)
}
\maketitle
\abstract{
Based on Koopman formalism for classical statistical mechanics, we propose a formalism to define hybrid quantum-classical dynamical systems by defining (outer) automorphisms of the $C^*$ algebra of hybrid operators and realizing them as linear transformations on the space of hybrid states. These hybrid states are represented as density matrices on the Hilbert space obtained from the hybrid $C^*$--algebra by the GNS formalism. We also classify all possible dynamical systems which are unitary and obtain the possible hybrid Hamiltonian operators.
}

 \section{Introduction}

 Hybrid quantum-classical systems are physical models of systems where quantum degrees of freedom interact with classical ones. Their most common application is to approximate full quantum models  to simplify them while keeping an accurate model of the most relevant degrees of freedom. The most paradigmatic example are molecular models where most of the degrees of freedom are treated classically while using a quantum system to model the behavior of the most external electrons which are responsible for the chemical properties of the system.  There are many different approaches to define hybrid dynamical systems for molecular systems (see \cite{Yonehara2012} for a nice review), some of them are based on hybrid dynamics on the space of hybrid states (\cite{zhuNonBornOppenheimerLiouvillevonNeumann2005,Jasper2005,jasperNonbornoppenheimerMolecularDynamics2006,alonsoStatisticsNoseFormalism2011,alonsoEhrenfestDynamicsPurity2012,alonsoEhrenfestStatisticalDynamics2018,Agostini2014}), others are algorithmic (see \cite{tullyMolecularDynamicsElectronic1990,tullyMixedQuantumclassicalDynamics1998,Tully1998}), others are obtained as suitable limit equations of the full-quantum dynamics (\cite{Prezhdo1997,kapralMixedQuantumclassicalDynamics1999,Kapral2015,nielsenNonadiabaticDynamicsMixed2000,nielsenStatisticalMechanicsQuantumclassical2001}). Hybrid dynamical models appear also in other contexts, as those considering the problem of measurements of quantum systems with classical devices \cite{sherryInteractionClassicalQuantum1979,Buric2012} and  other type of systems (see \cite{Hall2008,Peres2001,Elze2013,Diosi2014} and references therein). More recently, some new approaches based on Koopman's formalism for classical statistical mechanics have been introduced with several remarkable contributions \cite{Bondar2019,Gay-Balmaz2020,Gay-Balmaz2022}.

 Among the different dynamical models, one of the most common is the Ehrenfest model, which can be easily obtained from the original full quantum one within a semiclassical description \cite{bornemannQuantumclassicalMolecularDynamics1996}. Let us use it as a reference example to explain the whole framework. Ehrenfest equations are defined on the cartesian product of the classical and quantum phase spaces $M_C\times M_Q$, which represent the hybrid pure states. This dynamics can be proved to admit a Hamiltonian description  with a suitable hybrid Poisson bracket and a hybrid Hamiltonian function, which combines the classical and the quantum energies. This property can be used to define a dynamical statistical model with Ehrenfest equations as dynamics of the microstates \cite{alonsoStatisticsNoseFormalism2011}. In this statistical description, the state of the system is defined as a probability distribution on the hybrid phase-space, which follows a Liouville equation defined by a hybrid Hamiltonian function. Thus, the dynamics on the probability density (i.e., the state) is defined as the dual to the dynamics on the space of hybrid observables. This is a natural way to define a consistent statistical mechanical system leading to a well defined thermodynamics (see \cite{balescuStatisticalDyamicsMatter1997}).
  This statistical description has relevant applications (\cite{alonsoEhrenfestDynamicsPurity2012,alonsoEhrenfestStatisticalDynamics2018}) but also some limitations, as the difficulty to write an entropy function and the corresponding notion of canonical ensemble \cite{alonsoEntropyCanonicalEnsemble2020} or, in more general terms, of an equilibrium thermodynamics.  From the mathematical point of view, these limitations are associated with the incompatibility of the notion of hybrid state as a probability density on the phase space and the definition of hybrid entropy, which require an alternative notion of state.  Indeed, the usual choice to represent the state of the hybrid system is a family of quantum-density-matrix-like operators, parametrized by classical variables. This type of hybrid states has been used extensively since the early eighties (see \cite{aleksandrovStatisticalDynamicsSystem1981}) and it  is used in many of the references presented above. But the challenge is to define a consistent master equation for this object, satisfying that:
  \begin{itemize}
    \item it is inner in the space of hybrid states, i.e., the defining properties of the state such as positiveness, normalization, etc are preserved at all times; 
    \item and it is dual to the dynamics of the observables;
  \end{itemize}
  which are the conditions required by the usual construction in Statistical Mechanics  (see \cite{balescuStatisticalDyamicsMatter1997}). This is a complicated task that, to the best of our knowledge, has not been satisfactorily solved yet.
  
  Our aim in this paper is to provide a new solution for this problem by changing our approach: instead of direcly looking for possible master equations, we will  consider the dynamics on the space of hybrid observables in a different way. First, we will build a hybrid $C^*$--algebra defined as the tensor product of the classical and the quantum $C^*$--algebras of observables of each subsystem. This is the same algebra that we would obtain if we model the physical system as a pure quantum one, and then take the classical limit of one of its subsystems.  Then, we will borrow Koopman's idea to define dynamics on the space of observables by defining an outer automorphism of the $C^*$--algebra, and implement that condition on our hybrid observables.  In order to do that, we will use the GNS construction (\cite{gelfandImbeddingNormedRings1943,Segal1947,Segal1947a}) to define a suitable representation of the algebra on a hybrid Hilbert space. In this way, we can impose linearity to the dynamics in a natural way, and make the treatment much more simple.   Once the dynamics on the observables is defined, we will build a  consistent master equation defining the dynamics of the physical state as the dualized system.  In the general case, several conditions must be imposed for the dynamics to be defined on the space of states. In this work, we will consider only the simplest case of unitary dynamics, where these conditions are immediately satisfied, but many others are admissible and will be considered and analyzed in a forthcoming paper. 

The structure of the paper is as follows.  In Section \ref{sec:introduction} we will consider the mathematical tools which will be used in the rest of the paper. First, we will summarize Koopman's formalism for classical statistical mechanics. Then, we will briefly discuss  the notion of $C^*$--algebra, the GNS construction and the notion of state. Finally, we will present these  constructions for the simple cases of the set of observables of classical and quantum systems. With these tools, we will build in Section \ref{sec:hybridalgebra} the notion of hybrid $C^*$--algebra and will characterize their states and the corresponding GNS representation. In Section \ref{sec:hybrid_dynamics} we will use Koopman's idea to define outer automorphisms of the hybrid algebra and discuss all the conditions that the dualized dynamics on the space of hybrid states must satisfy. Then, we will identify the simplest of these dynamics which will define unitary transformations of the GNS-Hilbert space. Finally, in Section \ref{sec:conclusions} we will summarize our main conclusions and discuss the analysis of alternative solutions which will be presented in future papers.

\section{Mathematical preliminaries}
\label{sec:introduction}

\subsection{Koopman formalism for Classical Statistical Mechanics}
\label{sec:Koopman}
We will follow the reference \cite{Chruscinski2006} in this brief summary. 
Let us consider a classical statistical system defined on a symplectic manifold $(M, \omega)$ of dimension $n$, through a measure $\mu$ defined by a probability density $\rho:M\to \mathbb{R}$, satisfying
$$
\mu(M)=\int_M d\mu=\int_M \rho \, d \Omega=1, \qquad d \Omega=\omega^n. 
$$
Consider the Hilbert space $\mathcal{L}^2(M)$ defined by square-integrable functions $f:M\to \mathbb{C}$ with respect to the scalar product 
\begin{equation}
\label{eq:scalarproduct}
\langle f_1 , f_2\rangle=\int_M \bar f_1 f_2 d\, \Omega.
\end{equation}
If we consider thus the state $\psi_\rho\in (\mathcal{L}^2(M), d\Omega)$ satisfying the condition
\begin{equation}
\label{eq:psi}
\rho=\bar \psi_\rho \psi_\rho.
\end{equation}
This object will represent the state of our statistical system in terms of the Hilbert space $\mathcal{L}^2(M)$.

If we consider a Hamiltonian dynamical system $X_M$ on $M$, it defines a Liouville equation for the probability density $\rho$ in the formalism
\begin{equation}
\label{eq:Liouville}
\dot \rho= -X_H(\rho).
\end{equation}
The corresponding flow $F_t:M\to M$ is a symplectomorphism $F_t^*\omega=\omega$, and therefore it defines a unitary transformation on  $\mathcal{L}^2(M)$:
\begin{equation}
\label{eq:unitary}
U_t(f)=F_t^* f, \qquad \forall f\in C^\infty(M)\subset \mathcal{L}^2(M).
\end{equation}
The result follows immediately from Liouville theorem and the linearity of the pullback of any differentiable mapping.

Being a unitary transformation, Stone theorem ensures the existence of a self-adjoint operator $L$ satisfying
\begin{equation}
\label{eq:Stone}
U_t=e^{-iLt}.
\end{equation}
Being the infinitesimal generator of the Hamiltonian evolution {\color{black}$U_t$ defined above, $L$ must be determined by} the Hamiltonian vector field $X_H$, i.e.
\begin{equation}
\label{eq:liouvillian}
L \psi_\rho= -i X_H (\psi_\rho)=-i \left ( \frac{\partial H}{\partial q^k}\frac{\partial}{\partial p_k}- \frac{\partial H}{\partial p_k}\frac{\partial}{\partial q^k}   \right ) \psi_\rho,
\end{equation}
where $(q^k, p_k)$ represent a chart of Darboux coordinates on $M$. 
This unitary operator translates the dynamics to the functions $\psi_\rho$ which thus satisfy
\begin{equation}
\label{eq:schrodinger}
i\dot \psi_\rho=L \psi_\rho.
\end{equation}

We can see that the classical wave functions are functions of both positions and momenta, and that the infinitesimal generator of the unitary evolution is a first-order differential operator. It is immediate from these relations to re-obtain in this quantum language the Liouville continuity Equation \eqref{eq:Liouville} for $\rho=\bar \psi_\rho \psi_\rho$.

From the point of view of the operators, notice that $q$'s and $p$'s are analogous variables and therefore the  operators having those variables as spectrum ($\hat Q$ and $\hat P$) must behave as multiplicative operators and hence they must commute. By extension, we can conclude that the whole algebra of classical observables is commutative when realized as linear operators on the Hilbert space. Hence, if we consider the Heisenberg picture on that algebra, we conclude that it is not possible to define an evolution corresponding to the action of the commutator with a Hamiltonian contained in the algebra, i.e., we can not consider inner automorphisms.  On the other hand, if we consider the adjoint action of the  Hamiltonian operator  \eqref{eq:liouvillian}, we can define a non-trivial evolution on the space of linear operators of $\mathcal{L}^2(M)$. Furthermore, the Hamiltonian  \eqref{eq:liouvillian} is chosen in such a way that the commutative subalgebra corresponding to the classical operators is preserved by the evolution, i.e., the evolution defines an automorphism of the classical subalgebra. {\color{black}Some other dynamics can also be considered, with analogous results on the classical system but small differences in the quantum model (see \cite{Jauslin2010})}.

In conclusion, Koopman formalism defines a commutative algebra of operators representing the classical magnitudes, which is represented as a subalgebra of the total space of linear operators of the Hilbert space  $\mathcal{L}^2(M)$. Dynamics corresponds to an outer-automorphism of that subalgebra, generated by the Hamiltonian operator \eqref{eq:liouvillian}.

{\color{black}
\subsection{Operator $C^*$--algebras and the GNS construction}

}
We will now consider the set of observables of our theory in more detail. 
In order to do that we are going to consider the $C^*$ algebras containing the set of classical, quantum and hybrid operators. For details on these topics, see \cite{landsmanMathematicalTopicsClassical1998,Bratelli1987}. 

Given a $C^*$--algebra $\mathcal{A}$, we can consider its dual space $\mathcal{A}^*$ and use the norm on the algebra to define a norm on the dual space. Indeed, given $\omega\in \mathcal{A}^*$, we define its norm as 
\begin{equation}
\label{eq:norm_state}
\| \omega\|=\mathrm{sup} \{ \vert \omega(a) \vert , \, \|a\|=1 \}.
\end{equation}
The involution on $\mathcal{A}$ allows also to introduce a notion of {\color{black}positivity for linear functionals, and thus for states}. Thus, given $\omega\in \mathcal{A}^*$ we say that it is positive definite if
\begin{equation}
\label{eq:positiveness}
\omega( a^*a) \geq 0 \qquad \forall a\in \mathcal{A}.
\end{equation}
\begin{definition}
 A \textbf{state} of a $C^*$--algebra $\mathcal{A}$ is defined as a positive linear functional on $\mathcal{A}$ with norm equal to one. 
\end{definition}

Finally, GNS theorem (see \cite{gelfandImbeddingNormedRings1943,Segal1947,Segal1947a}) ensures that given a $C^*$--algebra $\mathcal{A}$ and a state $\omega$, we can always build a representation $\pi$ of $\mathcal{A}$ on the set of (bounded) linear operators on a Hilbert space $\mathcal{B}(\mathcal{H})$ where the state is associated to a cyclic vector of $\mathcal{H}$, whose orbit under the representation of $\mathcal{A}$ is dense in $\mathcal{H}$.  This result allows us to recover the description of the elements of any $C^*$ algebra as linear operators on a suitable Hilbert space. 

Let us see now how this concept allows us to recover Koopman's construction in a simple way, and introduce a treatment of hybrid quantum-classical systems in these algebraic terms.

\subsection{Examples: classical and quantum systems}
{\color{black} Let us consider two examples which are relevant for us: the set of observables of a classical system, and the set of observables of a quantum one. As the quantum case is simpler and will provide us with some valuable properties, we will consider it first.}

\subsubsection{The quantum case}
Again, for the sake of simplicity we will consider the algebra of bounded operators $\mathcal{B}(\mathcal{H})$ over a Hilbert space $\mathcal{H}$ with respect to composition; and  the hermitian adjoint as the involution $A^*= A^\dagger$. Furthermore, we consider the norm of an operator $A$

\begin{equation}
\label{eq:normQ}
\| A \|=\mathrm{sup}\{ \|A \psi \|, \quad \psi\in \mathcal{H}, \,\,  \|\psi \|=1  \}
\end{equation}

Again, it is a well known fact that this set {\color{black} which will be denoted in the following as} $\mathcal{A}_Q$ becomes thus a $C^*$--algebra. Such an algebra contains the  set of observables of a quantum system, which corresponds to the subset of self-adjoint operators:
\begin{equation}
\label{eq:self-adjoint}
L=\left \{ A\in \mathcal{A}_Q \vert \, A^\dagger=A \right \}.
\end{equation}

Finally, regarding the set of states of $\mathcal{A}_Q$, it is well known that the set of quantum states is in one-to-one correspondence with the set of density matrices $\mathcal{D}(\mathcal{H})$ on the Hilbert space $\mathcal{H}$: 
{\color{black}
\begin{equation}
\label{eq:densitymatrices}
\mathcal{D}(\mathcal{H})= \left \{ \hat \rho \in \mathcal{B}(\mathcal{H}) \ \vert \, {\color{black} \hat \rho}= \hat \rho^\dagger; \hat \rho>0;  \mathrm{Tr} \hat\rho=1   \right \}
\end{equation}
From Gleason theorem (see \cite{gleasonMeasuresClosedSubspaces1957}), we know that with this set we are considering all possible states of a quantum system whose state space corresponds to a Hilbert space $\mathcal{H}$:
\begin{theorem}[Gleason]
  Let $\mathcal{H}$ be a separable Hilbert space with dimension greater than 2 and let $\mu$ be a measure defined on the closed subspaces of $\mathcal{H}$. Then, there exists a positive semi-definite self adjoint operator $\hat \rho$ of the trace class, satisfying that for any closed subspace $A\subset \mathcal{H}$,
  $$
\mu(A)=\mathrm{Tr} (\hat \rho P_A),
  $$
  where $P_A:\mathcal{H}\to \mathcal{H}$ is the orthogonal projection of the Hilbert space on $A$.
\end{theorem}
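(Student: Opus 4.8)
The plan is to prove the statement along Gleason's original route, through the theory of \emph{frame functions}. The first step is to turn the measure-theoretic statement into a pointwise one. A measure $\mu$ on the closed subspaces is non-negative, vanishes on $\{0\}$, has finite total mass $\mu(\mathcal{H})=:W$ (which we may normalize to $1$), and is additive over countable orthogonal families; restricting it to the one-dimensional subspaces gives a function $f(x)=\mu(\mathbb{C}x)$ on the unit sphere $S\subset\mathcal{H}$ which, by additivity, satisfies $\sum_i f(x_i)=W$ for every orthonormal basis $\{x_i\}$ — i.e.\ $f$ is a nonnegative \emph{frame function} of weight $W$. Conversely a positive trace-class $\hat\rho$ yields the frame function $x\mapsto\langle x,\hat\rho x\rangle$, so it suffices to show every nonnegative frame function is \emph{regular}, $f(x)=\langle x,\hat\rho x\rangle$, and then to recover the trace-class property and the identity $\mu(A)=\mathrm{Tr}(\hat\rho P_A)$ from that of $\mu$.

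The heart of the argument is the three-dimensional real case: a nonnegative frame function $f$ on the unit sphere $S^2\subset\mathbb{R}^3$ is regular. I would organize this in three sub-steps. (i) \emph{Boundedness}: $0\le f(x)\le W$, immediate from $f\ge 0$ and the frame identity applied to an orthonormal triple through $x$. (ii) \emph{Continuity}: this is the delicate analytic lemma. Fixing a great circle with pole $p$, the restriction of $f$ to that circle is itself a frame function of weight $W-f(p)$; by a careful ``pivoting'' argument that slides orthonormal triples along varying great circles and uses only the frame identity together with the uniform bound from (i), one shows the oscillation of $f$ is controlled and hence that $f$ is continuous on $S^2$. (iii) \emph{Identification with a quadratic form}: once $f$ is continuous, expand it in spherical harmonics, $f=\sum_{\ell\ge 0}f_\ell$. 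The frame condition is $SO(3)$-invariant and rotations preserve the degree of harmonics, so each homogeneous component $f_\ell$ is again a frame function — of weight $W$ for $\ell=0$ and weight $0$ for $\ell\ge 1$; a direct evaluation on suitably chosen orthonormal triples then forces $f_\ell\equiv 0$ for every $\ell\ne 0,2$. Hence $f$ is a constant ($W/3$) plus a degree-two harmonic, i.e.\ $f(x)=\langle x,\hat\rho x\rangle$ for a symmetric $\hat\rho$, with $\hat\rho\ge 0$ since $f\ge 0$.

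The third step is to bootstrap in dimension. For $\dim\mathcal{H}\ge 3$ (finite-dimensional, or separable and infinite-dimensional) consider the homogeneous extension $\tilde f(x)=\|x\|^2 f(x/\|x\|)$; by the previous step its restriction to every three-dimensional subspace is a genuine quadratic form, hence so is its restriction to every two-plane, and since $\tilde f(x)$ depends only on $x$ these restrictions are automatically compatible, so $\tilde f$ is a quadratic form on all of $\mathcal{H}$; the bound $f\le W$ makes it bounded, whence $f(x)=\langle x,\hat\rho x\rangle$ for a unique bounded self-adjoint $\hat\rho\ge 0$. The complex case is handled by the same device, restricting to real three-dimensional subspaces to obtain real quadratic forms that reassemble into a Hermitian form. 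Finally, countable additivity of $\mu$ says the frame identity holds for countably infinite orthonormal bases, which reads $\sum_i\langle e_i,\hat\rho e_i\rangle=W<\infty$; for $\hat\rho\ge 0$ this is exactly the statement that $\hat\rho$ is trace class with $\mathrm{Tr}\,\hat\rho=W$ (equal to $1$ after normalization), and then $\mu(A)=\sum_{e_i\in A}f(e_i)=\sum_{e_i\in A}\langle e_i,\hat\rho e_i\rangle=\mathrm{Tr}(\hat\rho P_A)$ for every closed subspace $A$ by summing over an orthonormal basis of $A$.

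I expect the main obstacle to be sub-step (ii), the continuity of nonnegative frame functions on $S^2$: the reduction, the gluing, and the trace-class count are essentially bookkeeping, and sub-step (iii) is a clean if somewhat lengthy harmonic-analysis computation, whereas the continuity lemma is the one genuinely hard estimate in Gleason's theorem — and it is precisely what fails in dimension two, explaining the hypothesis $\dim\mathcal{H}>2$.
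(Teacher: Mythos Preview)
Your outline is a faithful sketch of Gleason's original argument via frame functions, and the overall architecture --- reduction to frame functions, the three-dimensional real case (boundedness, continuity, spherical-harmonic identification), bootstrap to higher dimensions and to the complex case, and finally the trace-class recovery --- is correct. You are also right that the continuity lemma on $S^2$ is the genuine obstacle and the place where the hypothesis $\dim\mathcal{H}>2$ enters.

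However, you should be aware that the paper does not prove this theorem at all: it is stated as a classical result and simply cited to Gleason's 1957 paper. Gleason's theorem is used here only as a tool to justify representing states by density matrices on $\mathcal{H}_C\otimes\mathcal{H}_Q$, so there is nothing to compare your argument against in the paper itself. If your goal was to supply a proof where the paper gives none, your plan is sound; if your goal was to reconstruct the paper's own reasoning, then the honest answer is that the paper treats this as a black box.

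One small technical caution on your step (iii): the assertion that ``each homogeneous component $f_\ell$ is again a frame function'' is not quite immediate from $SO(3)$-invariance of the frame condition alone. What is true is that the space of continuous weight-zero frame functions is a closed $SO(3)$-invariant subspace, hence a direct sum of full irreducibles $H_\ell$; determining which $\ell$ occur then requires checking, for each degree, whether \emph{every} harmonic of that degree satisfies the frame identity --- equivalently, computing the eigenvalue of the frame-averaging operator on each $H_\ell$ and showing it is nonzero precisely for $\ell\neq 0,2$. Your phrasing slightly conflates these two steps, but the conclusion is correct.
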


This result justifies the use of density matrices to represent the states of any quantum system for it encodes completely the probabilistic nature of the state (remember that the physical magnitudes, being represented by self-adjoint operators, can be written, by means of their spectral decompositions, as real linear combination of those orthogonal projectors).

Going back to the representation of the $C^*$--algebra, it is also well known that the }GNS construction defines an irreducible representation of $\mathcal{A}_Q$ if we select a pure state, and a reducible one if the density matrix is a mixed state {\color{black}(see \cite{Evans1998})}. The corresponding GNS representation $\pi_Q:\mathcal{A}_Q\to \mathcal{B}(\mathcal{H}_Q)$ is the natural one.

\subsubsection{The classical case}
For the sake of simplicity we will consider as an example {\color{black}to model the set of physical observables of a classical system} the set of compactly supported complex functions on a manifold $M_C$, $C_c(M_C, \mathbb{C})${\color{black}, although other possibilities may also be considered with analogous properties}. 
In order to endow this set with a $C^*$--algebra structure we will use:
\begin{itemize}
    \item the pointwise algebra $\cdot_C$
    \item the complex conjugation as involution $f^*(x)=\bar f(x)$,
    \item and the supremum norm $\| f\|=\mathrm{sup} \{ {\color{black} | f(x) |}\;| \; x\in M_C\}$
\end{itemize} 
It is immediate to verify that the set $\mathcal{A}_C=C_c(M_C, \mathbb{C})$ becomes thus a Banach algebra. This algebra can be considered to contain the (bounded) physical observables of a classical system (since the condition on compactness is a technical restriction which does not introduce serious physical limitations).

Finally, let us consider the states of $\mathcal{A}_C$.  From the definition, they must correspond to the positive-definite elements of the dual space having norm equal to 1. From the Riesz-Markov representation theorem {\color{black}(see, for instance, \cite{ReedSimon1981})}, we know that the set of states on $\mathcal{A}_C$ coincides with the set of {\color{black}(Radon)} measures on the domain of the classical functions, i.e., given $\omega\in \mathcal{A}_C^*$, there exists a measure $\mu$ satisfying  
\begin{equation}
\label{eq:riesz-markov}
\omega(a)=\int_{M_C} d\mu \, a, \qquad \forall a\in \mathcal{A}_C.
\end{equation}
If we consider a reference measure on $M_C$, as it might be the symplectic phase-space volume $d\Omega$, for most states (i.e., excluding particular cases as the Dirac delta) we can compute the corresponding Radon-Nikodym derivative and obtain thus a density $F_C:M_C\to \mathbb{R}^+$ satisfying 
\begin{equation}
\label{eq:density}
\omega(f)=\int_{M_C} d\Omega \, F_C a, \qquad \forall a\in \mathcal{A}_C.
\end{equation} 
From the physical point of view, states on $\mathcal{A}_C$ correspond then to statistical ensembles on the classical phase space with probability density $F_C$. {\color{black}This is the starting point of Koopman original proposal, a certain classical probability density. We can reproduce Koopman construction by considering the representation, via GNS construction, of the system  on a  Hilbert space.  It is well known, that} the GNS construction for $\mathcal{A}_C$ with state \eqref{eq:density} defines the representation $\pi_C:\mathcal{A}_C\to \mathcal{B}(\mathcal{H}_C)$ as a set of multiplicative operators acting on the Hilbert space $\mathcal{H}_C=\mathcal{L}^2(M_C, d\mu)$, as we saw in Section  \eqref{sec:Koopman}.

{\color{black}Once on the Hilbert space $\mathcal{H}_C$, we know from Gleason theorem that the state can be written as an element of the set of density matrices $\mathcal{D}(\mathcal{H}_C)$.} We can write, in a simple way, the expression of the density matrix on $\mathcal{H}_C$ in terms of the original classical density: 
{\color{black}
\begin{proposition}
\label{prop:classical}
Consider a classical state $\omega$ of the classical $C^*$--algebra $\mathcal{A}_C$, defined by a probability density function $F_C$ with respect to a measure $d\Omega$
$$
\omega=F_C d\Omega,
$$
where $d\Omega$ defines the cyclic vector of the GNS representation of $\mathcal{A}_C$.
 Then, the expression of the density matrix associated with $\omega$ by the GNS representation can be written as:
\begin{equation}
    \label{eq:classical_DM}
    \hat \rho_C=\int_{M_C} d\Omega(\xi) \int_{M_C} d\Omega(\xi ')\sqrt{F_C(\xi) F_C(\xi')} |\xi\rangle \langle \xi' |. 
    \end{equation}
\end{proposition}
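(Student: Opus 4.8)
The plan is to run the GNS construction for the pair $(\mathcal{A}_C,\omega)$ explicitly and then recognise the resulting cyclic vector as the Koopman wavefunction $\sqrt{F_C}$ of Section~\ref{sec:Koopman}. First I would form the GNS sesquilinear form on $\mathcal{A}_C$,
\[
\langle a, b\rangle_\omega = \omega(a^* b) = \int_{M_C} d\Omega\, \overline{a(\xi)}\, b(\xi)\, F_C(\xi),
\]
observe that its null ideal consists of the functions vanishing $d\Omega$--almost everywhere on $\{F_C>0\}$, and identify the completion of the quotient with the space of square--integrable functions on $M_C$, with classical observables acting by pointwise multiplication, exactly as in Section~\ref{sec:Koopman}. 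The key step is to realise this Hilbert space inside $\mathcal{L}^2(M_C, d\Omega)$ through the isometric intertwiner $a\mapsto \sqrt{F_C}\,a$, which sends the cyclic vector (namely $\lim_\lambda[e_\lambda]$ for an approximate identity $e_\lambda$, since $\mathcal{A}_C$ need not be unital) to the unit vector $\psi_{F_C}:=\sqrt{F_C}$, in accordance with the relation $\rho=\bar\psi_\rho\psi_\rho$ of~\eqref{eq:psi}; here $\|\psi_{F_C}\|^2 = \int_{M_C} F_C\, d\Omega = 1$ because $F_C$ is a probability density.

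Next I would invoke the general fact that the GNS state is the vector state of its cyclic vector, $\omega(a)=\langle \psi_{F_C}, \pi_C(a)\,\psi_{F_C}\rangle$, so that by Gleason's theorem (or by direct inspection) the density matrix on $\mathcal{B}(\mathcal{H}_C)$ representing $\omega$ is the rank--one orthogonal projector $\hat\rho_C = |\psi_{F_C}\rangle\langle\psi_{F_C}|$. One checks immediately that $\hat\rho_C^\dagger=\hat\rho_C$, $\hat\rho_C\ge 0$ and $\mathrm{Tr}\,\hat\rho_C = \|\psi_{F_C}\|^2 = 1$, so $\hat\rho_C\in\mathcal{D}(\mathcal{H}_C)$, and that $\mathrm{Tr}(\hat\rho_C\,\pi_C(a))=\omega(a)$ for all $a\in\mathcal{A}_C$, which pins $\hat\rho_C$ down uniquely.

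Finally I would expand $\psi_{F_C}$ in the generalised evaluation basis $\{|\xi\rangle\}_{\xi\in M_C}$ of $\mathcal{H}_C$, normalised so that $\int_{M_C} d\Omega(\xi)\,|\xi\rangle\langle\xi| = \mathbb{1}$ and $\langle\xi,\psi\rangle = \psi(\xi)$, giving $|\psi_{F_C}\rangle = \int_{M_C} d\Omega(\xi)\,\sqrt{F_C(\xi)}\,|\xi\rangle$; substituting this into $\hat\rho_C = |\psi_{F_C}\rangle\langle\psi_{F_C}|$ produces the double--integral formula~\eqref{eq:classical_DM}.

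The only delicate point --- everything else (the isometry and intertwining properties, the positivity and trace computations) being routine --- is the status of the continuous basis $\{|\xi\rangle\}$, which is a rigged--Hilbert--space device rather than a genuine orthonormal basis; to remain fully rigorous one should read~\eqref{eq:classical_DM} as the statement that $\hat\rho_C$ is the orthogonal projector onto $\mathbb{C}\,\sqrt{F_C}$, equivalently that its integral kernel with respect to $d\Omega$ is $K(\xi,\xi')=\sqrt{F_C(\xi)F_C(\xi')}$. It is also worth remarking that $\hat\rho_C$ turns out to be rank one even though $\omega$ is a genuinely mixed statistical ensemble: the statistical content is not encoded in the rank of $\hat\rho_C$ but in the reducibility of $\pi_C$, whose commutant is the maximal abelian algebra of all multiplication operators on $\mathcal{H}_C$.
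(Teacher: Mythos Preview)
Your argument is correct, but it differs in spirit from the paper's proof. The paper simply \emph{verifies} the formula: it takes the double integral~\eqref{eq:classical_DM} as given, computes $\mathrm{Tr}(\hat\rho_C\,\pi_C(a))$ using $\mathrm{Tr}(|\xi\rangle\langle\xi'|\,\pi_C(a))=a(\xi)\,\delta(\xi-\xi')$, and checks that the result equals $\int_{M_C}F_C\,a\,d\Omega=\omega(a)$. You instead \emph{derive} the formula by running GNS, identifying the cyclic vector with the Koopman wavefunction $\sqrt{F_C}$ via the intertwiner $a\mapsto\sqrt{F_C}\,a$, and recognising $\hat\rho_C$ as the rank-one projector $|\sqrt{F_C}\rangle\langle\sqrt{F_C}|$ before expanding it in the generalised $\{|\xi\rangle\}$ basis. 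Your route is longer but buys more: it explains \emph{why} the density matrix has the factorised kernel $\sqrt{F_C(\xi)F_C(\xi')}$, makes the link to~\eqref{eq:psi} explicit, and your closing remark that the mixedness of $\omega$ is encoded in the reducibility of $\pi_C$ rather than in the rank of $\hat\rho_C$ is a genuinely useful clarification that the paper does not spell out. The paper's direct check, on the other hand, is shorter and avoids any discussion of approximate identities or the rigged-Hilbert-space status of $\{|\xi\rangle\}$, at the price of leaving the formula unmotivated.
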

\begin{proof}
  Indeed, if we take a function $a(\xi)\in \mathcal{A}_C$, and consider the multiplicative operator $\pi_C(a)=a(\xi)$, it follows
  $$
  \mathrm{Tr}\left ( \hat \rho_C \pi_C(a)  \right )= \int_{M_C} d\Omega(\xi) \int_{M_C} d\Omega(\xi ')\sqrt{F_C(\xi) F_C(\xi')} \mathrm{Tr} \left ( |\xi\rangle \langle \xi' | \pi_C(a)  \right ).
  $$
  
  As the classical algebra acts as multiplicative operators on $\mathcal{H}_C$, 
  
  \begin{equation}
  \label{eq:classical_cond}
  \mathrm{Tr} \left ( |\xi\rangle \langle \xi' | \pi_C(a) \right )=a(\xi) \delta(\xi-\xi'),
  \end{equation}
 we obtain that
  \begin{eqnarray}
  \mathrm{Tr}\left ( \hat \rho_C \pi_C(a)  \right )=& \int_{M_C} d\Omega(\xi) \int_{M_C} d\Omega(\xi ')\sqrt{F_C(\xi) F_C(\xi')} a(\xi) \delta(\xi-\xi')=  \nonumber \\ & \int_{M_C}d\Omega(\xi) F_C(\xi) a(\xi)=\omega(a),\qquad \forall a\in \mathcal{A}_C. 
  \end{eqnarray}

\end{proof}

}
\section{The hybrid $C^*$--algebra}
\label{sec:hybridalgebra}

Let us now consider the algebra containing the observables of a hybrid quantum-classical system. As in general the hybrid model is obtained as a suitable partial classical limit of a full-quantum model, a natural candidate corresponds to the tensor product of the two  $C^*$--algebras above, i.e. $\mathcal{A}_H=\mathcal{A}_C\otimes \mathcal{A}_Q$. {\color{black}As we are mostly focused on the application to hybrid quantum-classical physical systems, in the following we will consider only the tensor product of the two examples introduced above, even if many of our conclusions may be of interest for the product of arbitrary commutative and non-commutative $C^*$--algebras.}
{\color{black}
\subsection{The $C^*$--algebra structure}}
The hybrid product is defined in terms of the classical and the quantum products on separable states as:
\begin{equation}
\label{eq:hybridassociative}
(a\otimes A)\cdot_H(b\otimes B):=(a\cdot_C b)\otimes (A\cdot_Q B), \qquad \forall a,b\in \mathcal{A}_C, A, B \in \mathcal{A}_Q.
\end{equation}

{\color{black} Let us consider now the involution. }On the set of elements of the form
\begin{equation}
\label{eq:hybrid_element}
f=\sum_k \gamma_k a_k\otimes A_k, \quad \gamma_k\in \mathbb{C}, a_k\in \mathcal{A}_C, A_k\in \mathcal{A}_Q,
\end{equation}
we can consider the  operation \eqref{eq:hybridassociative}. This makes it an algebra which we will denote as $\mathcal{A}_H$.  On that algebra we can consider the involution 
\begin{equation}
\label{eq:involution}
f^*=\sum_k \bar \gamma_k a_k^*\otimes A_k^\dagger,
\end{equation}
where $a_k^*$ and $A_k^\dagger$ represent the classical and quantum involutions respectively. Clearly, this makes $\mathcal{A}_H$ an involutive algebra.

Regarding the definition of a norm, a few comments are in order (for a more detailed explanation see, for instance, \cite{Bruckler1999}).  In principle, there are different possible norms to be defined on the algebraic tensor product $\mathcal{A}_C\otimes \mathcal{A}_Q$ to make the set a $C^*$--algebra. But as we are interested in the GNS construction, the most natural candidate seems to be the \textit{spatial norm} defined by the inclusion of $\mathcal{B}(\mathcal{H}_C)\otimes \mathcal{B}(\mathcal{H}_Q)$  in $\mathcal{B}(\mathcal{H}_C\otimes \mathcal{H}_Q)$ and the definition of a representation 
\begin{equation}
\label{eq:GNS_h}
\pi_H=\pi_C\otimes \pi_Q,
\end{equation}
{\color{black} with} the norm
\begin{equation}
\label{eq:normH}
\| f \|=\| \pi_H(f) \|_{\mathcal{B}(\mathcal{H}_C\otimes \mathcal{H}_Q)}. 
\end{equation}
The construction was introduced by T. Turumaru in \cite{Turumaru1953} and does not depend on the particular representations of the factors.  Despite the diversity of possible norms in the general case, as $\mathcal{A}_C$ is a commutative algebra and hence a nuclear one, it is possible to prove that the $C^*$ norm on $\mathcal{A}_C\otimes \mathcal{A}_Q$ is unique (see \cite{Bruckler1999}). Hence, we will keep this construction above as the constitutive definition of the hybrid $C^*$--algebra structure for $\mathcal{A}_H$.

\subsection{Hybrid states}

As we saw above for a general $C^*$--algebra, hybrid states must be positive-definite elements of $\mathcal{A}_H^*$ with norm equal to 1. Obviously, the tensor product of a classical state and a quantum one satisfies these requirements. Hence, we may think in an example of hybrid state as the product of a classical measure $d\mu$  on $M_C$  and a quantum density matrix $\hat \rho_Q$. This is a particular case of  the representation of hybrid states used in the Literature (see \cite{aleksandrovStatisticalDynamicsSystem1981,Buric2012,buricUnifiedTreatmentGeometric2013,alonsoEhrenfestDynamicsPurity2012,alonsoEhrenfestStatisticalDynamics2018,alonsoEntropyCanonicalEnsemble2020,Bondar2019} and references therein).  If we use the GNS representation $\pi_H:\mathcal{A}_H\to \mathcal{B}(\mathcal{H}_C\otimes \mathcal{H}_Q)$ to write them as states on $\mathcal{H}_C\otimes \mathcal{H}_Q$, they become a tensor product of density matrices $\hat \rho_C\otimes  \hat \rho_Q$, i.e., this example turns out to be what in the quantum systems {\color{black} literature} is called a simply separable state.  {\color{black}From a physical point of view, its most remarkable property is the lack of correlations between the two subsystems, classical and quantum.}

More general states of $\mathcal{A}_H$ can be written as in the previous references as a family of  quantum operators parametrized by classical variables $\hat \rho(\xi)$, satisfying the normalization conditions
\begin{equation}
\label{eq:rho_xi}
\int_{M_C}d\mu(\xi) \mathrm{Tr}\hat \rho(\xi)=1.
\end{equation}
The action on the elements of $\mathcal{A}_H$ is written as 
\begin{equation}
\label{eq:stateH}
\langle f \rangle=\int_{M_C}d\mu(\xi) \sum_k \gamma_k a_k(\xi) \mathrm{Tr} (\hat \rho (\xi)A_k), 
\end{equation}
for an element $f\in \mathcal{A}_H$ written as Equation \eqref{eq:hybrid_element}. This is the usual representation of hybrid states in the Literature. {\color{black}As we argued above,  the definition of a suitable master equation of hybrid dynamics for this type of system is still an open problem. Next Section is entirely devoted to the definition of different possible solutions for it.

In order to obtain a few useful properties of these hybrid states $\hat \rho(\xi)$ let us consider their representation as elements of $\mathcal{D}(\mathcal{H}_C\otimes \mathcal{H}_Q)$, i.e., as density matrices. 
Notice that as the hybrid state defines a measure on $\mathcal{A}_H$, from Gleason theorem \cite{gleasonMeasuresClosedSubspaces1957}, there must exist a density matrix $\hat \rho_H$ to represent the state of the algebra $\pi_H(\mathcal{A}_H)$ on $\mathcal{H}_C\otimes \mathcal{H}_Q$. The states $\hat \rho_H$ must safisfy thus:
\begin{equation}
\label{eq:hybridstates_rho}
\langle f \rangle=\int_{M_C}d\mu(\xi)\mathrm{Tr}(\hat \rho(\xi) f)=\mathrm{Tr} (\hat \rho_H \pi_H(f)), \quad \forall f\in \mathcal{A}_H.
\end{equation}

 Our main interest in this representation as $\hat \rho_H$ is that, as we will see below, it} will allow us to write a well defined master equation which captures the hybrid dynamics.  The definition of the dynamics is {\color{black}a very difficult task} if the state is of the form \eqref{eq:hybridstates_rho} because of the non-linearity of the classical {\color{black} subsystem dynamics} (see \cite{Buric2013a}), but when realized at the level of the Hilbert space $\mathcal{H}_C\otimes \mathcal{H}_Q$ and as external to the algebra {\color{black}$\pi_H(\mathcal{A}_H)$}, it is possible to write it in a simple way. {\color{black} This is why we are interested in this generalization of Koopman classical construction.
 
 Inspired by Proposition \ref{prop:classical}, we are going to consider states on $\mathcal{H}_C\otimes \mathcal{H}_Q$ of the form
 \begin{equation}
  \label{eq:rho_H}
  \hat \rho_H=\sum\limits_{{mm^\prime}}\int_{M_C}\int_{M_C}d\Omega(\xi) d\Omega (\xi^\prime)\rho_{mm^\prime}(\xi,\xi^\prime)\vert \xi,m\rangle\langle\xi^\prime,m^\prime\mid,
  \end{equation}
  where $\rho_{{mm^\prime}}(\xi,\xi^\prime):=\sqrt{\langle m\mid\hat\rho(\xi)\mid m^\prime\rangle\langle m\mid\hat\rho(\xi^\prime)\mid m^\prime\rangle}$ and $\{ |m\rangle \}$ is a basis for $\mathcal{H}_Q$, which will be assumed to be discrete, for simplicity, (although this is not relevant). For those systems, it is immediate to prove:
  \begin{lemma}
    If we consider the marginalized state defined by the trace over $\mathcal{H}_C$ of $\hat \rho_H$ we obtain the same state over $\mathcal{H}_Q$ defined marginalizing the state $\hat \rho(\xi)$ of $\mathcal{A}_H$, i.e.
    \begin{equation}
      \mathrm{Tr}_C(\hat\rho_H)=\sum\limits_{{mm^\prime}}\int_{M_C} d\Omega(\xi) \langle m\mid\hat\rho(\xi)\mid m^\prime\rangle\mid m\rangle\langle m^\prime\mid=\int_{M_C}d\Omega(\xi) \hat\rho(\xi),
      \end{equation}
where $\mathrm{Tr}_C$ stands for the partial trace over $\mathcal{H}_C$, i.e., with that operation we can represent the marginalization of the state $\hat \rho_H$. More generally, it can be easily shown that:
      \begin{equation}
      \label{Partialclassicaltracek}
          \mathrm{Tr}_C(\hat\rho_H^k)=\int_{M_C} d\Omega(\xi) \hat\rho(\xi)^k\;,
      \end{equation}        
  \end{lemma}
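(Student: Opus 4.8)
The plan is to compute directly with the integral-kernel representation \eqref{eq:rho_H}, using the same orthogonality relations that underlie Proposition \ref{prop:classical}, namely $\langle\xi|\xi'\rangle=\delta(\xi-\xi')$ on $\mathcal{H}_C$ and $\langle m|m'\rangle=\delta_{mm'}$ on $\mathcal{H}_Q$, together with the structural fact that the kernel factorizes: writing $g_{mm'}(\xi):=\sqrt{\langle m|\hat\rho(\xi)|m'\rangle}$ one has $\rho_{mm'}(\xi,\xi')=g_{mm'}(\xi)\,g_{mm'}(\xi')$ and, in particular, $\rho_{mm'}(\xi,\xi)=\langle m|\hat\rho(\xi)|m'\rangle$.

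For the first identity I would evaluate $\mathrm{Tr}_C(\hat\rho_H)=\int_{M_C}d\Omega(\eta)\,\langle\eta|\hat\rho_H|\eta\rangle$, insert \eqref{eq:rho_H}, and use $\langle\eta|\xi\rangle\langle\xi'|\eta\rangle=\delta(\eta-\xi)\,\delta(\eta-\xi')$ to carry out both classical integrations, which forces $\xi=\xi'=\eta$. What remains is $\sum_{mm'}\int_{M_C}d\Omega(\eta)\,\rho_{mm'}(\eta,\eta)\,|m\rangle\langle m'|=\sum_{mm'}\int_{M_C}d\Omega(\eta)\,\langle m|\hat\rho(\eta)|m'\rangle\,|m\rangle\langle m'|$, and the completeness expansion $\sum_{mm'}\langle m|\hat\rho(\eta)|m'\rangle\,|m\rangle\langle m'|=\hat\rho(\eta)$ yields the claim; one further trace over $\mathcal{H}_Q$ then recovers the normalization \eqref{eq:rho_xi}.

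For the general statement I would first obtain a closed form for $\hat\rho_H^{k}$ by iterating the product: multiplying by one more copy of $\hat\rho_H$ glues a bra $\langle\xi',m'|$ to a ket $|\xi'',m''\rangle$ through $\langle\xi'|\xi''\rangle\langle m'|m''\rangle=\delta(\xi'-\xi'')\,\delta_{m'm''}$, consuming one classical integration and one quantum sum per junction. After $k-1$ such gluings one is left with $\hat\rho_H^{k}=\sum_{m,n}\int_{M_C}\!\int_{M_C}d\Omega(\sigma)\,d\Omega(\theta)\,K^{(k)}_{mn}(\sigma,\theta)\,|\sigma,m\rangle\langle\theta,n|$, where $K^{(k)}_{mn}(\sigma,\theta)$ is a chain of $k$ factors $\rho_{\,\cdot\,\cdot}$ linked by $k-1$ intermediate quantum sums $\sum_{n_1,\dots,n_{k-1}}$ and $k-1$ intermediate classical integrations over auxiliary variables $\tau_1,\dots,\tau_{k-1}$. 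Taking $\mathrm{Tr}_C$ then identifies $\theta=\sigma$, and the remaining work is to verify that, once the endpoints coincide, the string of $2k$ square-root factors constituting $K^{(k)}_{mn}(\sigma,\sigma)$ recombines so that each auxiliary integration becomes redundant and the integrand collapses to the ordinary matrix product $\sum_{n_1,\dots,n_{k-1}}\langle m|\hat\rho(\sigma)|n_1\rangle\cdots\langle n_{k-1}|\hat\rho(\sigma)|n\rangle=\langle m|\hat\rho(\sigma)^{k}|n\rangle$; summing over $m,n$ and integrating over $\sigma$ then gives $\mathrm{Tr}_C(\hat\rho_H^{k})=\int_{M_C}d\Omega(\sigma)\,\hat\rho(\sigma)^{k}$. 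Equivalently one can argue by induction on $k$ via $\mathrm{Tr}_C(\hat\rho_H^{k})=\mathrm{Tr}_C(\hat\rho_H\,\hat\rho_H^{k-1})$, with the same square-root bookkeeping.

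The step I expect to be the main obstacle is precisely this recombination of the factorized square roots along the closed chain — tracking how the factors $g_{mm'}(\xi)=\sqrt{\langle m|\hat\rho(\xi)|m'\rangle}$ pair once the classical endpoints are identified, with a consistent branch choice for the off-diagonal matrix elements — since it is what is responsible for collapsing the $k-1$ auxiliary classical integrations; the delta-function algebra and the index bookkeeping are otherwise routine. A secondary, purely technical point is that these manipulations use the non-normalizable generalized vectors $|\xi\rangle$, so to be fully rigorous I would either approximate by genuine orthonormal bases of $\mathcal{L}^2(M_C,d\Omega)$ and pass to the limit, or work throughout in the rigged-Hilbert-space formalism already implicit in \eqref{eq:classical_cond} and Proposition \ref{prop:classical}.
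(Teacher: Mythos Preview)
Your argument for the first identity ($k=1$) is correct and is exactly what the paper does: the partial trace over $\mathcal{H}_C$ produces $\delta(\xi-\xi')$, which sets the two classical arguments equal and reduces $\rho_{mm'}(\xi,\xi)$ to $\langle m|\hat\rho(\xi)|m'\rangle$.

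For general $k$, however, the step you correctly flag as the main obstacle is not merely difficult --- it actually fails. After gluing $k$ copies and taking $\mathrm{Tr}_C$, the intermediate classical integrations do \emph{not} collapse. Concretely, for $k=2$ your computation gives
\[
\mathrm{Tr}_C(\hat\rho_H^{2})
=\sum_{m_1m_2m_3}\Bigl(\int_{M_C}\! d\Omega(\xi)\,g_{m_1m_2}(\xi)\,g_{m_2m_3}(\xi)\Bigr)^{2}\,|m_1\rangle\langle m_3|,
\]
whereas the claimed right-hand side is
\[
\int_{M_C}\! d\Omega(\xi)\,\hat\rho(\xi)^{2}
=\sum_{m_1m_2m_3}\int_{M_C}\! d\Omega(\xi)\,g_{m_1m_2}(\xi)^{2}\,g_{m_2m_3}(\xi)^{2}\,|m_1\rangle\langle m_3|,
\]
and $\bigl(\int fg\bigr)^{2}\neq\int f^{2}g^{2}$ in general. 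A clean counterexample is the separable case $\hat\rho(\xi)=F_C(\xi)\hat\rho_Q$: then $\hat\rho_H=\hat\rho_C\otimes\hat\rho_Q$ with $\hat\rho_C=|\sqrt{F_C}\rangle\langle\sqrt{F_C}|$ a rank-one projector, so $\mathrm{Tr}_C(\hat\rho_H^{k})=\hat\rho_Q^{k}$, while $\int d\Omega\,\hat\rho(\xi)^{k}=\bigl(\int F_C^{k}\bigr)\hat\rho_Q^{k}$, and $\int F_C^{k}\neq 1$ for generic normalized $F_C$ once $k\geq 2$. The paper's one-line proof (``analogous to Proposition~\ref{prop:classical}'') glosses over exactly this point; the general formula \eqref{Partialclassicaltracek} as stated does not hold for the kernel \eqref{eq:rho_H}, so your honest identification of the obstacle is in fact a diagnosis that the argument cannot be completed.
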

  \begin{proof}
    The proof is completely analogous to the one used in the proof of Proposition \ref{prop:classical} and based on the fact that the trace corresponds to a $\delta(\xi-\xi^\prime)$ which produces the expressions above. 
  \end{proof}

With this result, we can conclude:

\begin{proposition}
\label{prop:states}
  The state $\hat \rho_H$ given by Eq. \eqref{eq:rho_H} is the density matrix  on $\mathcal{H}_C\otimes \mathcal{H}_Q$ corresponding to the hybrid state $\hat \rho(\xi)$.
\end{proposition}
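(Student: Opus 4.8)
Proposition~\ref{prop:states} asserts two things at once: that the operator $\hat\rho_H$ of Eq.~\eqref{eq:rho_H} pairs with $\pi_H(\mathcal{A}_H)$ exactly as the hybrid state $\hat\rho(\xi)$ does, i.e.\ that it verifies Eq.~\eqref{eq:hybridstates_rho}, and that $\hat\rho_H$ is a genuine density matrix on $\mathcal{H}_C\otimes\mathcal{H}_Q$. I would establish the pairing identity by copying the strategy of Proposition~\ref{prop:classical}. Both sides of \eqref{eq:hybridstates_rho} are norm-continuous linear functionals of $f\in\mathcal{A}_H$, and the simple tensors $a\otimes A$ span a dense subalgebra, so it suffices to check $\mathrm{Tr}(\hat\rho_H\,\pi_H(a\otimes A))=\langle a\otimes A\rangle$ for $a\in\mathcal{A}_C$, $A\in\mathcal{A}_Q$; as in Proposition~\ref{prop:classical}, the densities $\hat\rho(\xi)$ are read relative to the reference measure $d\Omega$ that fixes the classical cyclic vector, so $d\mu$ and $d\Omega$ may be used interchangeably here.

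For that step I would factor $\pi_H=\pi_C\otimes\pi_Q$ and $\mathrm{Tr}=\mathrm{Tr}_C\circ\mathrm{Tr}_Q$, substitute \eqref{eq:rho_H}, and use that $\pi_C(a)$ acts multiplicatively --- which is the content of the formal identity \eqref{eq:classical_cond}, $\mathrm{Tr}_C(|\xi\rangle\langle\xi'|\,\pi_C(a))=a(\xi)\,\delta(\xi-\xi')$ --- together with $\mathrm{Tr}_Q(|m\rangle\langle m'|\,\pi_Q(A))=\langle m'|A|m\rangle$, obtaining
\[
\mathrm{Tr}\bigl(\hat\rho_H\,\pi_H(a\otimes A)\bigr)=\sum_{mm'}\int_{M_C}\!\!\int_{M_C} d\Omega(\xi)\,d\Omega(\xi')\,\rho_{mm'}(\xi,\xi')\,a(\xi)\,\delta(\xi-\xi')\,\langle m'|A|m\rangle .
\]
The delta collapses the double integral onto the diagonal, where $\rho_{mm'}(\xi,\xi)=\sqrt{\langle m|\hat\rho(\xi)|m'\rangle^{2}}=\langle m|\hat\rho(\xi)|m'\rangle$ for the branch implicit in \eqref{eq:rho_H}; the sum $\sum_{mm'}\langle m|\hat\rho(\xi)|m'\rangle\langle m'|A|m\rangle=\mathrm{Tr}(\hat\rho(\xi)A)$ reassembles, and one is left with $\int_{M_C}d\Omega(\xi)\,a(\xi)\,\mathrm{Tr}(\hat\rho(\xi)A)=\langle a\otimes A\rangle$, matching \eqref{eq:stateH}. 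This is essentially the proof of the preceding Lemma with the extra classical factor $a(\xi)$ carried along.

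It then remains to check that $\hat\rho_H$ is a density matrix. Unit trace is immediate: by the preceding Lemma, $\mathrm{Tr}\,\hat\rho_H=\mathrm{Tr}_Q\!\int_{M_C}d\Omega(\xi)\,\hat\rho(\xi)=\int_{M_C}d\Omega(\xi)\,\mathrm{Tr}\,\hat\rho(\xi)=1$ by \eqref{eq:rho_xi}. Hermiticity follows from the self-adjointness of each $\hat\rho(\xi)$ and the symmetry $\overline{\rho_{mm'}(\xi,\xi')}=\rho_{m'm}(\xi',\xi)$ of the kernel in \eqref{eq:rho_H}, once the square root is taken on a branch compatible with complex conjugation. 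I expect positivity to be the genuine obstacle: the pairing with $\pi_H(\mathcal{A}_H)$ does not by itself single out $\hat\rho_H$, since the hybrid GNS representation is highly reducible --- $\pi_H(\mathcal{A}_H)''$ is only the algebra of $\mathcal{B}(\mathcal{H}_Q)$-valued multiplication operators on $\mathcal{L}^2(M_C,d\mu)$, not all of $\mathcal{B}(\mathcal{H}_C\otimes\mathcal{H}_Q)$ --- so one cannot shortcut through uniqueness in Gleason's theorem (indeed, among the many operators with the right pairing, the specific form \eqref{eq:rho_H} is the one selected by the marginalization property \eqref{Partialclassicaltracek}). One therefore has to show directly that the kernel $\rho_{mm'}(\xi,\xi')$ is of positive type on $\mathcal{L}^2(M_C,d\mu)\otimes\mathcal{H}_Q$; I would try to relate \eqref{eq:rho_H} to the pointwise operators $\sqrt{\hat\rho(\xi)}\ge 0$, but the fact that $\rho_{mm'}$ involves an \emph{entrywise} square root rather than an operator square root is exactly what makes this step non-routine, and pinning down the branch of that square root consistently is part of the same difficulty.

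Throughout, the manipulations are formal in the customary sense --- the $|\xi\rangle$ are not vectors of $\mathcal{H}_C$ and \eqref{eq:rho_H} is a continuous superposition --- so a fully rigorous version amounts to reading $\hat\rho_H$ as the trace-class operator with integral kernel $\rho_{mm'}(\xi,\xi')$ and carrying out the pairing at that level, precisely the convention already in force in Proposition~\ref{prop:classical}.
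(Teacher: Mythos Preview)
Your argument for the pairing identity \eqref{eq:hybridstates_rho} is essentially the paper's own proof: reduce to separable observables $a\otimes A$, factor $\pi_H=\pi_C\otimes\pi_Q$ and $\mathrm{Tr}=\mathrm{Tr}_C\,\mathrm{Tr}_Q$, invoke the multiplicative action of $\pi_C(a)$ to produce the $\delta(\xi-\xi')$, collapse to the diagonal where $\rho_{mm'}(\xi,\xi)=\langle m|\hat\rho(\xi)|m'\rangle$, and reassemble $\mathrm{Tr}_Q(\hat\rho(\xi)A)$. The paper's proof stops there and extends by linearity; your version is the same computation with a few more intermediate lines written out.

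Where you go beyond the paper is in taking the phrase ``density matrix'' seriously and asking about trace, Hermiticity, and positivity. The paper's proof does not address any of these: it verifies only the pairing with $\pi_H(\mathcal{A}_H)$ and tacitly treats that as the content of the Proposition. Your observation that positivity is genuinely delicate is well placed --- the kernel involves an \emph{entrywise} square root of complex matrix entries rather than an operator square root, so neither the branch choice nor positive-definiteness is automatic, and the reducibility argument you give correctly shows that the pairing alone cannot rescue this. That is not a defect in your proof relative to the paper; it is a gap you have spotted that the paper leaves unexamined.
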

\begin{proof}
 Again, the proof is analogous to the classical case.  If we consider separable hybrid observables we can write that
 \begin{eqnarray}
  \label{eq:relation}
  &\omega(a\otimes A)=\mathrm{Tr}(\hat\rho_H\pi_H(a\otimes A))= \nonumber 
  \\
  &\int_{M_C\times M_C}d\Omega_C(\xi) d\Omega_C(\xi^\prime)\mathrm{Tr}\left(\sum\limits_{{mm^\prime}}\rho_{{mm^\prime}}(\xi,\xi^\prime)\mid \xi,m\rangle\langle \xi^\prime,m^\prime\mid\pi_H(a\otimes A)\right)= \nonumber  \\
  &\int_{M_C}d\Omega_C(\xi) \mathrm{Tr}_Q\left(\sum\limits_{{mm^\prime}}\rho_{{mm^\prime}}(\xi,\xi)\mid m\rangle\langle m^\prime\mid (a(\xi)\otimes A)\right)= \nonumber \\
  &\int_{M_C}d\Omega_C(\xi) \mathrm{Tr}_Q\left(\hat\rho(\xi) (a(\xi)\otimes A)\right)\;,
  \end{eqnarray}
  where we wrote that $\pi_C(a)=a(\xi)$,  is a multiplicative operator on $\mathcal{H}_C$, and we used the previous lemma for the classical trace. $\mathrm{Tr}_Q$ represents the partial trace over $\mathcal{H}_Q$, and we use that $\mathrm{Tr}=\mathrm{Tr}_C\mathrm{Tr}_Q$. The proof for general elements of $\mathcal{A}_H$  of the form \eqref{eq:hybrid_element} is immediate.
  
\end{proof}

\subsection{Hybrid entropy function}
 }
An important application of this result is the possibility to relate the von Neumann entropy associated with the state $\hat \rho_H$ and the hybrid entropy function introduced in \cite{alonsoEntropyCanonicalEnsemble2020}.  {\color{black} Indeed, our group introduced a hybrid entropy function for states of the form $\hat \rho( \xi)$, based on the analysis of mutually exclusive hybrid events, which reads:
\begin{equation}
\label{eq:hybridentropy}
S_H[\hat \rho(\xi)]=-\int_{M_C}d\mu(\xi) \mathrm{Tr} \left ( \hat \rho(\xi) \log \hat \rho(\xi)  \right ).
\end{equation}
Based on this function, we were also able to identify a candidate for hybrid canonical ensemble, using the MaxEnt formalism. For this state to constitute a valid candidate for a thermodynamical equilibrium ensemble, it is necessary to identify a valid dynamics for the hybrid system, having the MaxEnt solution as a stable equilibrium point. Identifying such a hybrid dynamics is the main motivation for this work. As we argued above, searching for such a dynamics on the set of hybrid states of the form $\hat \rho(\xi)$ is a difficult task, since the classical subsystem makes the dynamics nonlinear.  Our proposal in this work is to generalize Koopman construction, define the hybrid system on a Hilbert space $\mathcal{H}_C\otimes \mathcal{H}_Q$, and search for possible hybrid dynamics on the set $\mathcal{D}(\mathcal{H}_C\otimes \mathcal{H}_Q)$.

In order to do that, our first task is to prove that we can recover write the hybrid entropy function in terms of the hybrid state $\hat \rho_H\in\mathcal{D}(\mathcal{H}_C\otimes \mathcal{H}_Q)$. A natural choice is to consider the well-known von Neumann entropy 
$$
S_{vN}[\hat \rho_H]=-\mathrm{Tr} (\hat \rho_H \mathrm{log}\hat \rho_H).
$$

\begin{lemma}
  Let $\hat \rho(\xi)$ be a state for the hybrid algebra $\mathcal{A}_H$ and $\hat \rho_H \in\mathcal{D}(\mathcal{H}_C\otimes \mathcal{H}_Q)$ the state of the algebra $\pi_H(\mathcal{A}_H)\in \mathcal{B}(\mathcal{H}_C\otimes \mathcal{H}_Q)$. Then, 
  $$
S_{vN}[\hat \rho_H]=S_H[\hat \rho(\xi)].
  $$
\end{lemma}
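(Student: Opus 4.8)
The plan is to deduce the identity from the preceding lemma, Eq.~\eqref{Partialclassicaltracek}, which already controls \emph{all} integer powers of $\hat\rho_H$, and then to pass from those powers to the function $x\log x$. Writing $\{\lambda_a\}$ for the eigenvalues of $\hat\rho_H$ and $\{p_n(\xi)\}$ for those of $\hat\rho(\xi)$, I would first take the quantum trace $\mathrm{Tr}_Q$ of both sides of Eq.~\eqref{Partialclassicaltracek} and use $\mathrm{Tr}=\mathrm{Tr}_Q\,\mathrm{Tr}_C$ to obtain
\[
\mathrm{Tr}\big(\hat\rho_H^{\,k}\big)=\sum_a\lambda_a^{\,k}=\int_{M_C}d\mu(\xi)\,\mathrm{Tr}\big(\hat\rho(\xi)^{k}\big)=\int_{M_C}d\mu(\xi)\sum_n p_n(\xi)^{k},\qquad k=1,2,3,\dots
\]
so that the two ``replica'' functions $Z(s):=\mathrm{Tr}(\hat\rho_H^{\,s})$ and $\widetilde Z(s):=\int_{M_C}d\mu(\xi)\,\mathrm{Tr}(\hat\rho(\xi)^{s})$ coincide at every positive integer.

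The central step is to upgrade this to real $s$, i.e.\ to establish $Z(s)=\widetilde Z(s)$ for all $s\ge1$. Since $\hat\rho_H$ is a density matrix its spectrum lies in $[0,1]$ with $\sum_a\lambda_a=1$, so $|Z(s)|\le1$ on the half-plane $\{\mathrm{Re}\,s\ge1\}$, on whose interior $Z$ is holomorphic and up to whose boundary it is continuous; the same holds for $\widetilde Z$ under the integrability hypotheses already needed for Eq.~\eqref{Partialclassicaltracek} (automatic when each $\hat\rho(\xi)$ has spectrum in $[0,1]$, and always when $\mathcal H_Q$ is finite-dimensional). As $Z-\widetilde Z$ is then bounded and holomorphic on this half-plane and vanishes at all positive integers, Carlson's theorem forces $Z\equiv\widetilde Z$ there, in particular on the real ray $s\ge1$. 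One may also bypass the complex analysis: by linearity Eq.~\eqref{Partialclassicaltracek} gives $\mathrm{Tr}_C(P(\hat\rho_H))=\int_{M_C}d\mu(\xi)\,P(\hat\rho(\xi))$ for every polynomial $P$ with $P(0)=0$, and one then approximates $x\mapsto x\log x$ uniformly on $[0,1]$ by such polynomials, the passage to the limit under the trace requiring compactness of $\hat\rho_H$ together with a tail estimate on its eigenvalues.

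Finally I would differentiate at $s=1$, using that $\tfrac{d}{ds}\big|_{s=1}\mathrm{Tr}(\hat\sigma^{\,s})=\mathrm{Tr}(\hat\sigma\log\hat\sigma)$ for any density operator $\hat\sigma$. The left-hand side yields $Z'(1^{+})=\mathrm{Tr}(\hat\rho_H\log\hat\rho_H)=-S_{vN}[\hat\rho_H]$, while for the right-hand side one pushes the $s$-derivative through the classical integral (monotone convergence, by convexity of $s\mapsto p^{\,s}$) to get $\widetilde Z'(1^{+})=\int_{M_C}d\mu(\xi)\,\mathrm{Tr}(\hat\rho(\xi)\log\hat\rho(\xi))=-S_H[\hat\rho(\xi)]$. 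Equating the two one-sided derivatives, which are equal by the previous step, gives $S_{vN}[\hat\rho_H]=S_H[\hat\rho(\xi)]$.

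The step I expect to be the real obstacle is the middle one, the jump from integer powers to the non-polynomial function $x\log x$: algebraic use of Eq.~\eqref{Partialclassicaltracek} only reaches polynomials in $\hat\rho_H$, so one genuinely needs either the analytic-continuation argument (Carlson) or the approximation argument, the latter being delicate because $\hat\sigma\mapsto-\mathrm{Tr}(\hat\sigma\log\hat\sigma)$ is only lower semicontinuous in trace norm. The remaining points --- finiteness of both entropies and the interchanges of limit, trace and $M_C$-integral --- are routine once $S_H[\hat\rho(\xi)]<\infty$ is assumed, and are immediate when $\mathcal H_Q$ is finite-dimensional.
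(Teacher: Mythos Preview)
Your argument is correct, and like the paper it rests entirely on Eq.~\eqref{Partialclassicaltracek}; the difference lies in how you pass from integer powers of $\hat\rho_H$ to the function $x\log x$.

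The paper does this more directly: it writes $\log\hat\rho_H$ as the Taylor series $\sum_{n\ge1}(-1)^{n-1}(\hat\rho_H-\mathbb I)^n/n$, expands each $(\hat\rho_H-\mathbb I)^n$ binomially into powers $\hat\rho_H^{k+1}$, applies Eq.~\eqref{Partialclassicaltracek} term by term inside $\mathrm{Tr}=\mathrm{Tr}_Q\mathrm{Tr}_C$, and then resums the same series on the $\hat\rho(\xi)$ side. The justification offered is simply that one ``works directly on the spectrum,'' i.e.\ eigenvalue by eigenvalue on $[0,1]$, without further discussion of the interchange of the infinite sum with the trace or the $M_C$-integral.

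Your replica/Carlson route (and the polynomial-approximation alternative you sketch) is more elaborate but also more candid about exactly this analytic step: you isolate the interpolation from integers to real $s$ as the crux, supply a uniqueness theorem to justify it, and correctly flag the side conditions (spectra in $[0,1]$, exchange of $\partial_s$ with $\int_{M_C}$, lower semicontinuity of entropy). In effect the paper's Taylor expansion is a particular instance of the polynomial-approximation argument you mention, carried out formally; what you gain is a rigorous treatment of the limit, at the cost of invoking Carlson's theorem or a Weierstrass-type approximation. Either approach is acceptable here, and yours would serve as a clean replacement for the paper's computation.
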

}
\begin{proof}

As the density matrix is self-adjoint and hence diagonalizable, we can consider its spectral decomposition, where the spectrum is nowhere negative. From the properties of the trace and the definition of the logarithm as a series, we can work directly on the spectrum, and obtain:
\begin{eqnarray}
\label{eq:entropy_H}
S_H[\hat \rho_H]=& -\mathrm{Tr}\left(\hat\rho_H\sum_{n=1}^\infty(-1)^{n-1}\frac{(\hat\rho_H-\mathbb{I})^n}{n}\right)= \nonumber\\
& \mathrm{Tr}\left(\sum_{n=1}^\infty\frac{(-1)^{n-1}}{n}\sum_{k=0}^n\binom{n}{k}{\hat\rho}^{k+1}_H(-1)^{n-k}\right)= \nonumber \\
&\mathrm{Tr}_Q\left(\sum_{n=1}^\infty\frac{(-1)^{n-1}}{n}\sum_{k=0}^n\binom{n}{k}\mathrm{Tr}_C({\hat\rho}^{k+1}_H)(-1)^{n-k}\right) 
\end{eqnarray}
Making use of \eqref{Partialclassicaltracek}, we can substitute the partial trace over the classical part by an integral over phase space, and thus:
\begin{equation}
\label{eq:entropy_H_hybrid}
-S_H[\hat \rho_H]=\int_{M_C}d\Omega(\xi)\mathrm{Tr}_Q\left(\sum_{n=1}^\infty\frac{(-1)^{n-1}}{n}\sum_{k=0}^n\binom{n}{k}{\hat\rho(\xi)}^{k+1}(-1)^{n-k}\right)=-S_H[\hat\rho(\xi)],
\end{equation}
which is the hybrid entropy function introduced in \cite{alonsoEntropyCanonicalEnsemble2020}.  

\end{proof}

{\color{black} This}
is an important result from the physical point of view, since having an entropy function we can apply MaxEnt formalism to identify the state which maximizes the entropy while keeping the average energy fixed. This allows us to write the canonical ensemble of a hybrid system in this {\color{black}generalized} Koopman formalism in a straightforward way. From our analysis above, we conclude that it can be obtained from the result identified in \cite{alonsoEntropyCanonicalEnsemble2020} {\color{black} using Proposition \ref{prop:states}}. It is important to remark, though, that the MaxEnt computation for the case of $\hat \rho_H$ must be done with respect to operator $\pi_H(f_H)\in \pi_H(\mathcal{A}_H)$ associated with the hybrid energy function $f_H\in \mathcal{A}_H$, and not with respect to a Hamiltonian operator defining the dynamics as we saw in the Koopman classical formalism. Such an operator may generate the dynamics, but it does not have a physical meaning as energy of the system since it does not belong to the hybrid algebra. We shall discuss these issues in some detail in the following section.

It is also important to emphasize that the MaxEnt argument is completely independent of the dynamics of the microstates, but it only offers possible candidates to equilibrium ensembles. For these MaxEnt solutions to define actual thermodynamical ensembles, it is necessary to prove that the dynamics of the microstates preserves the solution obtained.  Hence, the existence of candidates to equilibrium ensembles defines constraints  to the possible dynamics that can be considered on the hybrid system, as we discuss in the next section.

\section{Hybrid dynamical systems}
\label{sec:hybrid_dynamics}
\subsection{Classical and quantum dynamics}
As we saw in Section \eqref{sec:Koopman}, Koopman's original construction proved that Liouville evolution equation of a classical statistical system on a phase space $M_C$ can be realized as a unitary one-parameter group of transformations   on a suitably defined Hilbert space $\mathcal{H}_C$. In order to do that, a Hamiltonian operator {\color{black}on $\mathcal{H}_C$ which does not belong to the classical (commutative) $C^*$-algebra which contains the set of physical magnitudes, is required} (see \cite{koopmanHamiltonianSystemsTransformation1931}). From that unitary group of transformations we can define equivalent dynamical systems on the Hilbert space $\mathcal{H}_C$ (via the Schrödinger equation), on the commutative subalgebra of $\mathcal{B}(\mathcal{H}_C)$ defined by the representation $\pi_C:\mathcal{A}_C\to \mathcal{B}(\mathcal{H}_C)$ (via the Heisenberg equation), or on the set of density matrices $\mathcal{D}(\mathcal{H}_C)$ (via von Neumann equation). All three systems are physically equivalent to the solutions of the Liouville equation on the set of statistical states on $M_C$, or the equivalent Hamiltonian evolution on the Poisson algebra of classical observables. Notice, though, that the dynamics must be defined on $\mathcal{A}_C$ as an outer automorphism since it is commutative and has a trivial Lie structure. We must add an additional Poisson tensor to the set of functions in $\mathcal{A}_C$ to be able to define the dynamics at the level of the algebra. And therefore the resulting dynamical system defines outer-automorphisms of the $C^*$--algebra. Nonetheless, this construction is compatible with the $C^*$--algebra description of the classical system (in terms of $\mathcal{A}_C$) presented above and the corresponding GNS representation $\pi_C:\mathcal{A}_C\to \mathcal{B}(\mathcal{H}_C)$. Thus, when written in the Hilbert space language, dynamics takes the usual form, but with a Hamiltonian which does not belong to $\pi_C(\mathcal{A}_C)$. This corresponds to the usual Koopman's construction. Notice that, in the representation process, the non-linear classical Liouville equation (with a Hamiltonian function in the classical algebra) becomes the linear Heisenberg equation (with a Hamiltonian which does not belong to the subalgebra $\pi_C(\mathcal{A}_C)$) {\color{black} or the linear von Neumann equation for the corresponding density matrix}. In a certain way,  nonlinearities are ``smoothed" by the representation. 

The quantum case is different since the definition of a dynamical system on $\mathcal{A}_Q$  is straightforward using its Lie canonical structure and the corresponding Heisenberg equation. Choosing a self-adjoint Hamiltonian in $\mathcal{A}_Q$ (even if non-bounded, in general), we can define a unitary evolution which can be implemented either at the level of the Hilbert space, at the level of the operator algebra, or at the level of the states, exactly as in the classical case. But finite dynamical transformations correspond to  bounded operators and hence inner automorphisms of the Lie structure of $\mathcal{A}_Q$. 

\subsection{Hybrid dynamics}
\subsubsection{General considerations}
Let us consider now the hybrid case. {\color{black} As we explained above, our final goal is to consider possible dynamics of hybrid states in order to identify those which are useful to model statistical quantum-classical systems. Among those, we will have to check whether or not they have the hybrid canonical ensemble identified with the MaxEnt formalism as a stable fixed point. If we succeed, we would have found an efficient way to model statistical hybrid system at finite temperature, which is a very relevant situation for molecular simulations. In this paper we will just consider the first problem: how to identify possible hybrid dynamics. We will define the problem and classify the solutions which are unitary, as in the case of classical Koopman dynamics. More general solutions and the evaluation on the hybrid canonical ensemble will be considered in future papers. }

We will consider directly the GNS representation of the system and therefore a system characterized by some density matrix $\hat \rho_H\in \mathcal{D}(\mathcal{H}_C\otimes \mathcal{H}_Q)$ and {\color{black} the algebra $\pi_H(\mathcal{A}_H)\subset \mathcal{B}(\mathcal{H}_C\otimes \mathcal{H}_Q)$}. The problem of defining a dynamical system directly at the level of the algebra $\mathcal{A}_H$ or its dual, which has received much more attention in the Literature, and its relation with our solution in this paper will be considered in a forthcoming publication. {\color{black} In this paper we will consider the definition of dynamics only at the level of the Hilbert space $\mathcal{H}_C\otimes \mathcal{H}_Q$. }

From what we learned in the classical and quantum case, it is clear that we are supposed to build an automorphism of the $C^*$ algebra $\mathcal{A}_H$, or, equivalently, of its image $\pi_H(\mathcal{A}_H)\subset \mathcal{B}(\mathcal{H}_C\otimes \mathcal{H}_Q)$. {\color{black}This automorphism generalizes Koopman construction to the hybrid case.} If the dynamics must act on the classical degrees of freedom in a non-trivial way, it should contain external elements to $\mathcal{A}_C$ (and hence to $\mathcal{A}_H$ when multiplied by quantum operators) and define an external automorphism of the subalgebra $\pi_H(\mathcal{A}_H)$. In this way we define a dynamical system on the space of hybrid physical magnitudes, which preserves the set, unlike what Ehrenfest dynamics was seen to do on $\mathcal{A}_H$ (\cite{Buric2013a}). 

As a generalization of the classical and quantum cases above, we can ask the dynamics to fulfill the following requirement: it may be defined on the whole $\mathcal{B}(\mathcal{H}_C\otimes \mathcal{H}_Q)$ but it must preserve the subalgebra $\pi_H(\mathcal{A}_H)$. For the sake of simplicity, we will consider only linear systems.  Therefore, we will consider a dynamical system of the form\begin{equation}
    \label{eq:automorphism}
     \frac{ d\pi_H(f) (t)}{dt}=\mathcal{L}\pi_H(f) (t), \qquad  \forall f \in \mathcal{A}_H,
    \end{equation}
where $\mathcal{L}$ represents a linear super-operator on $\mathcal{B}(\mathcal{H}_C\otimes \mathcal{H}_Q)$ and satisfies $\mathcal{L}( \pi_C (\mathcal{A}_H))\subset \pi_C (\mathcal{A}_H)$. In that case, we will write the master equation as an equation on the set of states as:
    \begin{equation}
    \label{eq:mastereq_gen}
     \frac{ d \hat \rho_H (t)}{dt}= \mathcal{L}^\dagger \hat \rho_H(t),
    \end{equation}
where $\mathcal{L}^\dagger$ represents the adjoint operator to $\mathcal{L}$. For the sake of simplicity, we assume that the operator $\mathcal{L}$  generates a bounded operator $e^{\mathcal{L}t}$. Nonetheless, this just  defines a flow on the dual space to $\mathcal{B}(\mathcal{H}_C \otimes \mathcal{H}_Q)$. Furthermore, {\color{black}we know that the set of density matrices $\mathcal{D}(H_C\otimes H_Q)$ corresponds to just a subset of that dual space (since we impose the conditions on the trace and positivity) and  we want the dynamics to preserve that subset}.  These conditions must also be imposed {\color{black} as restrictions on the dynamical system $\mathcal{L}$.}  Besides, after our analysis on the hybrid entropy and the equilibrium ensembles, it also makes sense to impose some entropic restrictions to the possible dynamical systems. If we want the hybrid dynamics to represent a microstate dynamics for a physical system, the value of von Neumann entropy on $\hat \rho_H(t)$  must be constant in time (if the hybrid system is isolated) or increase (if it is not isolated). 

A particularly simple case corresponds to the case of unitary dynamics, where there exists a Hamiltonian operator $\hat H$ which allows to write the operator $\mathcal{L}$ as its adjoint action, i.e.:
\begin{equation}
\label{eq:Heisenberg_hybrid}
\frac{ d\pi_H(f) (t)}{dt}=-i \left (\pi_H(f) (t) \hat H -\hat H\pi_H(f) (t)  \right ), \quad f\in \mathcal{A}_H.
\end{equation}
In this case, the dual equation (the corresponding von Neumann equation) 
\begin{equation}
\label{eq:vonNeumann_hybrid}
\frac{d\hat \rho_H(t)}{dt}=-i \left (\hat H \hat \rho_H(t)- \hat  \rho_H(t) \hat H \right ), \quad \hat \rho_H \in \mathcal{D}(\mathcal{H}_C\otimes \mathcal{H}_Q)
\end{equation}
is known to preserve the set of density states, since it defines the orbits of the coadjoint action of the evolution operator. 
Furthermore, in this case it is immediate that a unitary transformation preserves the von Neumann entropy of the state, since it preserves its spectrum. We will analyze the properties of these dynamical systems in the following section.

\subsubsection{{\color{black}Conditions on the automorphism of $\pi_H(\mathcal{A}_H)$ I: unitary dynamics}}

For the sake of simplicity, let us first consider the case of a unitary transformation, i.e., we consider as dynamical equation the adjoint action of a certain Hamiltonian operator $\hat H$ {\color{black}(as in Equation \eqref{eq:Heisenberg_hybrid})}. As we did above, we can write the Hamiltonian operator without loss of generality as the sum of three terms:
\begin{equation}
\label{eq:Hybrid_Ham}
\hat H= \hat H_C\otimes \mathbb{I}_Q+\mathbb{I}_C\otimes \hat H_Q+\hat H_{CQ}, 
\end{equation}
where again $\hat H_C$ represents the energy associated to the classical degrees of freedom, $\hat H_Q$ the energy of the quantum ones, while $\hat H_{CQ}$ represents the coupling between them. 

Written in this form, it is simple to study how to define a (unitary) automorphism of the image of the hybrid algebra $\mathcal{A}_H$. We know that the dependence in the classical degrees of freedom can not be only on those of $\mathcal{A}_C$, but also on those of $\mathcal{B} (\mathcal{H}_C)$ which do not belong to $\pi_C(\mathcal{A}_C)$. As $\pi_C(\mathcal{A}_C)$ is the (commutative) subalgebra of multiplicative operators on $\mathcal{L}^2(M_C, d\mu$), we can look for the external operators among those corresponding to the derivation operators, i.e., those representing the quantization of the variables conjugated to those of $M_C$. From Koopman's construction, we know that the quantization of those variables define suitable operators to define the proper unitary dynamics.   For the sake of simplicity, let us consider the case ${\color{black}M_C}=\mathbb{R}^{2n}$,  and let us denote as $\Pi_{q^k}$ and $\Pi_{p_j}$ the conjugated coordinates to $q^k$ and $p_j$ (required to define the quantization from $T^*{\color{black}M_C}$ on $\mathcal{H}_C=\mathcal{L}^2({\color{black}M_C})$). Let us extend the representation mapping $\pi_C$ to include these functions and define the corresponding operators $\pi_C(\Pi_{q^k})$ and $\pi_C(\Pi_{p_j})$. If we ask the adjoint action of Hamiltonian \eqref{eq:Hybrid_Ham} to preserve the subalgebra $\pi_H(\mathcal{A}_H)$ we must impose the following condition
\begin{equation}
\label{eq:condition_Ham}
[\pi_H (\mathcal{A}_H), \hat H] \subset \pi_H (\mathcal{A}_H). 
\end{equation}
As the action is linear, it is sufficient to impose the condition on separable operators of the form $\pi_C(a)\otimes \pi_Q(A)$; and ask the result to belong to the algebra. Also the coupling term in the Hamiltonian can be supposed to be a linear combination of separable operators on $\mathcal{H}_C$ and $\mathcal{H}_Q$ as $H_{CQ}=\sum_{jk} c_{jk} h^j_C\otimes h^k_Q$, for $c_{jk}\in \mathbb{R}$.  As the condition must hold for any hybrid observable, it is immediate to prove that:
\begin{itemize}
    \item the term $\hat H_C$ in the Hamiltonian must be linear in the operators $\pi_C( \Pi_{q^k})$ and $\pi_C( \Pi_{p_j})$ to recover the non-trivial classical dynamics on the classical degrees of freedom. Indeed, as $\pi_C$ and $\pi_Q$ are morphisms onto $\mathcal{B}(\mathcal{H}_C)$ and $\mathcal{B}(\mathcal{H}_Q)$ we can write
    \begin{equation}
    \label{eq:classical_term}
    [\pi_C(a)\otimes \pi_Q(A), \hat H_C\otimes \mathbb{I}_Q]=[\pi_C(a), \hat H_C]\otimes \pi_Q(A)
    \end{equation}
    The condition for $[\pi_C(a), \hat H_C]$ to belong to $\pi_C(\mathcal{A}_C)$ implies that $\hat H_C$ must be linear in $\pi_C (\Pi_{q^k})$ and $\pi_C (\Pi_{p_j})$. This is analogous to the Koopman's case, which coincides with this one if we fix to zero all other terms in the Hamiltonian.
    \item Purely quantum terms do not introduce any constraint on the dynamics.
    \item Finally, the coupling term $\hat H_{CQ}$ can not depend on the operators $\pi_C (\Pi_{q^k})$ and $\pi_C (\Pi_{p_j})$. Indeed, if we consider analytical elements in the hybrid algebra, 
  
    \begin{eqnarray}
    \label{eq:couplig_term}
 &   [\pi_C(a)\otimes \pi_Q(A), \hat H_{CQ}]=\sum_{jk}c_{jk}[\pi_C(a)\otimes \pi_Q(A), h_C^j\otimes h_Q^k]=  \nonumber \\
  & \sum_{jk}c_{jk} \left ([ \pi_C(a), h_C^j]\otimes \pi_Q(A) h^k_Q+\pi_C(a) h_C^j\otimes [\pi_Q(A), h_Q^k] \right ). 
    \end{eqnarray}
    
    For these terms to belong to $\pi_H(\mathcal{A_H})$, we obtain that
    
    $$[\pi_C(a), h^j_C] \in \mathcal{A}_C
     \qquad \pi_C(a) h^j_C\in \mathcal{A}_C; \qquad \forall \pi_C(a)\in \pi_C(\mathcal{A}_C), \quad \forall j.
    $$
     This only happens if 
    $$
    h_C^j\in \pi_C(\mathcal{A}_C), \qquad \forall j.
    $$
\end{itemize}
Hence, we have proved that:
\begin{theorem}
\label{thm:result}
    The only type of Hamiltonian operator of the form of Equation \eqref{eq:Hybrid_Ham} which generates a unitary dynamics on $\mathcal{B}(\mathcal{H}_C\otimes \mathcal{H}_Q)$  that defines an outer automorphism of the hybrid subalgebra $\pi_H(\mathcal{A}_H)$ has
     \begin{itemize}
     \item  a linear dependence on $\pi_C(\Pi_{q^k})$ and $\pi_C(\Pi_{p_j})$ in $\hat H_C$ as
     \begin{equation}
     \label{eq:Hc}
     \hat H_C=\sum_{kj}\left (\alpha_{k} \pi_C(\Pi_{q^k}) + \beta_j \pi_C(\Pi_{p_j}) \right ),
     \end{equation}
     where $\alpha_{k}, \beta_{j},\tilde H_C\in \pi_C(\mathcal{A}_C)$.  
     \item  The other coefficients must belong to the corresponding subalgebra, i.e., $\hat H_Q\in \mathcal{A}_Q, \hat H_{CQ}\in \pi_H(\mathcal{A}_H)$.
     \end{itemize}
\end{theorem}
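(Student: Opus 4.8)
The plan is to turn the invariance requirement \eqref{eq:condition_Ham} into three decoupled conditions, one for each summand of \eqref{eq:Hybrid_Ham}, and solve them separately. Since $[\pi_H(\mathcal{A}_H),\hat H]$ is linear in its first slot, it suffices to test membership in $\pi_H(\mathcal{A}_H)$ on separable operators $\pi_C(a)\otimes\pi_Q(A)$; and since $\pi_C,\pi_Q$ are $*$-homomorphisms, the commutators with $\hat H_C\otimes\mathbb{I}_Q$, $\mathbb{I}_C\otimes\hat H_Q$ and $\hat H_{CQ}$ factor through the two tensor legs in the way recorded in \eqref{eq:classical_term} and \eqref{eq:couplig_term}. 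Because these three summands act on disjoint ``slots'', the conditions they impose decouple, so I would treat them one at a time and then collect the conclusions. I would also keep in mind that, for the dynamics to be unitary, $\hat H$ must in addition be self-adjoint, which refines the final answer (it forces reality of $\alpha_k,\beta_j,\tilde H_C$ and a Liouville-type, divergence-free condition on $\alpha_k,\beta_j$, exactly as for the Koopman generator \eqref{eq:liouvillian}); I would note this but not belabour it, since the statement as phrased only asks for the form.

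The purely quantum term is immediate: $[\pi_C(a)\otimes\pi_Q(A),\mathbb{I}_C\otimes\hat H_Q]=\pi_C(a)\otimes[\pi_Q(A),\hat H_Q]$, and if $\hat H_Q\in\mathcal{A}_Q$ then $[\pi_Q(A),\hat H_Q]=\pi_Q([A,\hat H_Q])\in\pi_Q(\mathcal{A}_Q)$, so the result lies in $\pi_C(\mathcal{A}_C)\otimes\pi_Q(\mathcal{A}_Q)=\pi_H(\mathcal{A}_H)$ with no further constraint. The substantive point is the classical term: from \eqref{eq:classical_term} one needs $[\pi_C(a),\hat H_C]\in\pi_C(\mathcal{A}_C)$ for every $a\in\mathcal{A}_C$. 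Here I would use that $\pi_C(\mathcal{A}_C)$ is the algebra of multiplication operators on $\mathcal{L}^2(M_C,d\mu)$ and that the extended generators $\pi_C(\Pi_{q^k}),\pi_C(\Pi_{p_j})$ are, up to the factor $i$, the first-order operators $\partial/\partial q^k,\partial/\partial p_j$. Writing $\hat H_C$ (on analytic vectors) as an expansion in these derivations with multiplication-operator coefficients, the map $B\mapsto[\pi_C(a),B]$ lowers the differential order by one, so iterating it, $[\pi_C(a_1),[\pi_C(a_2),\hat H_C]]$ can be a multiplication operator for all $a_1,a_2$ only if $\hat H_C$ has order at most one; this is the Peetre/Grothendieck-type characterization of first-order differential operators by iterated commutators. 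Hence $\hat H_C=\tilde H_C+\sum_k\alpha_k\,\pi_C(\Pi_{q^k})+\sum_j\beta_j\,\pi_C(\Pi_{p_j})$ with $\tilde H_C,\alpha_k,\beta_j\in\pi_C(\mathcal{A}_C)$, which is \eqref{eq:Hc}; conversely $[\pi_C(a),\pi_C(\Pi_{q^k})]$ is proportional to $\pi_C(\partial a/\partial q^k)\in\pi_C(\mathcal{A}_C)$, so such an $\hat H_C$ genuinely preserves the algebra and reproduces Koopman's construction when the other two summands vanish.

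For the coupling term I would write $\hat H_{CQ}=\sum_{jk}c_{jk}\,h_C^j\otimes h_Q^k$ and read off from the expansion \eqref{eq:couplig_term} that preservation of $\pi_H(\mathcal{A}_H)$ forces, for all $a$, both $[\pi_C(a),h_C^j]\in\pi_C(\mathcal{A}_C)$ and $\pi_C(a)\,h_C^j\in\pi_C(\mathcal{A}_C)$ (selecting quantum test operators $A$ generically enough to separate the two tensor-factor contributions). The second of these already says that $h_C^j$ multiplies $\pi_C(\mathcal{A}_C)$ into itself, and since the multiplication operators form a maximal abelian $*$-subalgebra of $\mathcal{B}(\mathcal{H}_C)$ this forces $h_C^j\in\pi_C(\mathcal{A}_C)$, whence $\hat H_{CQ}\in\pi_C(\mathcal{A}_C)\otimes\pi_Q(\mathcal{A}_Q)=\pi_H(\mathcal{A}_H)$. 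Putting the three cases together gives exactly the stated classification. The main obstacle I anticipate is rigor in the order-reduction step for $\hat H_C$: one must fix the admissible class of (generally unbounded) operators $\hat H_C$ is allowed to range over, control the domains of $\pi_C(\Pi_{q^k}),\pi_C(\Pi_{p_j})$ so that the iterated-commutator manipulations are legitimate (passing to analytic elements, as the discussion preceding the theorem does), and invoke the differential-operator characterization carefully; a secondary, essentially cosmetic point is reconciling the zeroth-order piece $\tilde H_C$ with the way \eqref{eq:Hc} is displayed, and the minor care needed with compact supports when invoking the maximal-abelian property in the coupling step.
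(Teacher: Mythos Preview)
Your proposal is correct and follows essentially the same route as the paper: reduce to separable test elements, split the condition \eqref{eq:condition_Ham} across the three summands of \eqref{eq:Hybrid_Ham} via \eqref{eq:classical_term} and \eqref{eq:couplig_term}, and solve each piece. Your treatment is slightly more careful than the paper's in two places---you justify the linearity of $\hat H_C$ by an iterated-commutator/Peetre-type order-reduction rather than by direct inspection, and you invoke the maximal-abelian property of the multiplication algebra to force $h_C^j\in\pi_C(\mathcal{A}_C)$---but these are refinements of the same argument rather than a different one.
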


The dynamics is non-trivial in the classical degrees of freedom only if $\alpha_{k}\neq 0$ or $\alpha_{j}\neq 0$. In order to recover the expression of Koopman's construction, we may write
\begin{equation}
\label{eq:classical_coefficients}
\alpha_{k}=\pi_C \left (\frac{\partial H_C(q,p)}{\partial p_k}\right ), \qquad  \beta_{j}=- \pi_C\left (\frac{\partial H_C(q,p)}{\partial q^j}\right ), 
\end{equation}
where the function $H_C(q,p)\in \mathcal{A}_C$ represents the energy of the classical degrees of freedom. With that choice, for a non-interacting system ($H_{CQ}=0$),  a separable state of the form $\hat \rho_C\otimes \hat \rho_Q$ would evolve separately since the evolution operator factorizes. Hence we our construction includes as limit cases the classical and the quantum cases. Any non-trivial interaction (i.e., $H_{CQ}\neq 0$) modifies both terms, and introduces a hybrid behavior.  Obviously, in any case, the evolution is Hamiltonian. {\color{black}A similar dynamics is also considered in \cite{Jauslin2010}, from a slightly different perspective.}

\subsubsection{The master equation}
 {\color{black}Let us consider now the corresponding} equation on the set of density matrices.  Notice that the immediate consequence of Theorem \eqref{thm:result} above is that if we consider the master equation for the state $\hat \rho_H$ defined by {\color{black} von Neumann equation for} the same Hamiltonian, we obtain: 
\begin{eqnarray}
\label{eq:Hybrid_masterEq}
i\frac{d\hat \rho_H}{dt}=&\sum_{kj} \left [\pi_C \left (\frac{\partial H_C(q,p)}{\partial q^j}\Pi_{p_j}-\frac{\partial H_C(q,p)}{\partial p_k} \Pi_{q^k}\right ) \otimes \mathbb{I}_Q,\hat \rho_H \right ]+ \nonumber \\ & 
 [\mathbb{I}_C\otimes \hat H_Q, \hat \rho_H]+  [\hat H_{CQ}, \hat \rho_H],
\end{eqnarray}
where $H(q,p)\in \mathcal{A}_C$, $\hat H_Q\in \pi_Q(\mathcal{A}_Q)$ and $\hat H_{CQ}\in \pi_H(\mathcal{A}_H)$ and we used that $\pi_C(\mathcal{A}_C)$ is commutative. By construction, this evolution takes the previous Heisenberg equation to the set of physical states, and being unitary, it defines the corresponding dual unitary action on that space. Hence, the solution must be well defined inside the space of hybrid states since it corresponds to the co-adjoint action of the unitary group. In this way, this construction allows us to write a well defined dynamics on the set of hybrid states.

It is important to notice that, despite the automorphism being external to $\pi_H (\mathcal{A}_H)$, it is completely determined by the hybrid energy via the previous expression. As the dependence must be linear in the momenta $\Pi_{q^k}, \Pi_{p_j}$, there is no freedom left in the external degrees to determine the Hamiltonian.  

On the other hand, at the formal level the dynamics can be integrated straightforwardly, since it is defined by the evolution operator generated by the Hamiltonian  $\hat H$. If we assume that the Hamiltonian does not depend on time, the evolution operator corresponds to 
\begin{equation}
\label{eq:evolution}
\hat U(t)=e^{-i t \hat H},
\end{equation}
which is a unitary operator on $\mathcal{H}_C\otimes \mathcal{H}_Q$, whose adjoint action on the set of linear operators preserves the subalgebra $\pi_H(\mathcal{A}_H)$. Therefore, its co-adjoint action must preserve the corresponding set of density operators. {\color{black}This case represents the simplest generalization of Koopman construction to the case of hybrid systems.}

\subsubsection{{\color{black} Conditions on the automorphism of $\pi_H(\mathcal{A}_H)$ II: arbitrary linear dynamics.}}
If we consider arbitrary {\color{black}(bounded)} linear dynamics of the form of Equation \eqref{eq:automorphism}, possibilities are much richer. In principle, as we are considering linear subspaces, the condition of being an automorphism for $\pi_H(\mathcal{A}_H)$ implies that the operator generating the automorphism satisfies
\begin{equation}
\label{eq:automorphism_cond}
e^{\mathcal{L}t}(\pi_H(\mathcal{A}_H))\subset \pi_H(\mathcal{A}_H).
\end{equation}
Furthermore, if the automorphism is not unitary, we need to impose also some conditions on the dual operator $\mathcal{L}^\dagger$ for the master equation to be well defined. In particular we must take into account:
\begin{itemize}
    \item the evolution must be tangent to the set of density matrices, i.e., define a curve in the space of self-adjoint positive-definite operators. {\color{black}As we can safely assume that the Hilbert space under consideration is separable, we can always choose a numerable  basis and turn this space into $\ell^2(\mathbb{N})$ under a unitary isomorphism. Then, the Cholesky decomposition indicates that it is always possible  to factor the positive semidefinite density matrix  $\rho=LL^\dagger$ with $L$ lower triangular with respect to the order induced in the basis by the  isomorphism (see \cite{Paulsen2016}). Hence}
     a simple way to implement this condition may be to impose that
    \begin{equation}
    \label{eq:positivity}
    e^{\mathcal{L}^\dagger t} (T_1^\dagger T_1)=T_2^\dagger T_2, 
    \end{equation}
    for $T_1, T_2$ {\color{black}arbitrary bounded invertible mappings}.
    \item Furthermore, the trace must be preserved:
    \begin{equation}
    \label{eq:trace}
    \mathrm{Tr} \hat \rho_H(t)=1, \quad \forall t\Rightarrow \mathrm{Tr} \left (\frac {d\hat \rho_H(t)}{dt} \right )=0.
    \end{equation}
 
 \item Finally, if the dynamics is considered as the dynamics of a thermodynamic system, we must ask it to be compatible with the corresponding physical constraints. Thus, von Neumann entropy must be preserved or increased along the solutions (depending on the type of system) and the ensembles which must be of equilibrium (as the hybrid canonical ensemble discussed above) must be fixed points of the dynamics.

\end{itemize}

In general, the set of admissible dynamical systems may be large. We will explore this set and its properties in a forthcoming paper.

\section{Conclusions and outlook}
\label{sec:conclusions}

In this paper we have introduced a new framework based on Koopman's construction to study hybrid quantum-classical dynamics. We first consider the set of hybrid operators $\mathcal{A}_H$ defined as the tensor product of the classical and the quantum $C^*$--algebras. The quantum $C^*$ algebra corresponds to the usual algebra of linear operators of a Hilbert space. As it is well known, it is non-commutative. The classical $C^*$--algebra is commutative and corresponds to the differentiable functions of the classical phase space. We can consider that it is the result of the classical limit of a quantum algebra.  From this point of view we can think on the hybrid algebra as a partial classical limit of an originally full-quantum one. 

We can use the usual GNS construction on the hybrid algebra to define a representation of the hybrid observables as a subalgebra of the (bounded) linear operators of a hybrid Hilbert space $\mathcal{H}_C\otimes \mathcal{H_Q}$. The corresponding hybrid states are then realized as regular density matrices on this Hilbert space. We have built explicitly the density matrices from the states of the hybrid algebra, and we have also proved that von Neumann entropy of those density matrices coincides with the notion of hybrid entropy introduced in \cite{alonsoEntropyCanonicalEnsemble2020} for the hybrid states. This allows us to import to this framework a relevant physical result as the hybrid canonical ensemble determined in the same paper.  Hybrid dynamics is introduced on the algebra as an outer-automorphism of the representation of $\mathcal{A}_H$ on $\mathcal{H}_C\otimes \mathcal{H_Q}$ or, equivalently, as the corresponding dual dynamics on the set of density matrices. We have characterized the conditions of those dynamics, and have classified all possible unitary evolutions satisfying them. Notice, though, that from a physical point of view, the resulting family of unitary dynamics is limited, since it can not include back-reaction, i.e., the classical subsystem follows a dynamics which does not see the quantum subsystem. A detailed analysis of more general solutions and their properties will be presented in a forthcoming paper. We will also consider the relation with the dynamics written directly at the level of the $C^*$--algebra and its states, which is the most frequent case in the literature. 

Notice that, despite the apparent paradox of treating hybrid classical-quantum systems in a Hilbert space language, it offers several advantages. First, complexity is significantly decreased from an original full-quantum model, since the Hilbert space $\mathcal{H}_C$ representing the classical degrees of freedom is simpler than the original Hilbert space containing all the degrees of freedom, and its operators form a commutative subalgebra. Second, Hilbert space language offers several tools to consider linear dynamics which become non-linear at the classical level (as Koopman dynamics shows). Finally, it also offers the possibility of studying classical-quantum correlations of the statistical ensembles in a simple way, {\color{black}considering} the hybrid density matrices. There are also some drawbacks, as the difficulties to consider pure statistical hybrid states (as it happens in the Koopman's pure classical case). Nonetheless, from a physical point of view, the most common applications of these models correspond to experimental situations where it is impossible to assign precise initial conditions to the physical particles (for instance in a molecular system) and then we consider that missing pure states is not a serious limitation.

\section*{Acknowledgements}
The authors acknowledge partial finantial support of  Grant PID2021-123251NB-I00 funded by MCIN/AEI/10.13039/501100011033  and by the European Union, and of Grant E48-23R funded by Government of Aragón. L. G-B acknowledges that this work has been partially supported by the Madrid Government (Comunidad de Madrid-Spain) under the Multiannual Agreement with UC3M in the line of “Research Funds for Beatriz Galindo Fellowships” (C\&QIG-BG-CM-UC3M), and in the context of the V PRICIT (Regional Programme of Research and Technological Innovation). C.B-M and D.M-C acknowledge financial support by Gobierno de Aragón through the
grants defined in ORDEN IIU/1408/2018 and ORDEN CUS/581/2020 respectively.

\section*{References}


\end{document}